\newcommand{\fvs}{{\textsc{Feedback Vertex Set}}}
\newcommand{\trihit}{{\textsc{Triangle Hitting Set}}}
\newcommand{\oct}{{\textsc{Odd Cycle Transversal}}}
\theoremstyle{definition}
\newbox\ProofSym
\begin{document}
\title{Faster Algorithms for Cycle Hitting Problems on Disk Graphs\thanks{This work was supported by the National Research Foundation of Korea(NRF) grant 
funded by the Korea government(MSIT) (No.RS-2023-00209069)}}
%
%
\author{Shinwoo An \and
Kyungjin Cho \and
Eunjin Oh}
\authorrunning{An et al.}
%
\institute{POSTECH, Pohang, Korea\\
\email{\{shinwooan,kyungjincho,eunjin.oh\}@postech.ac.kr}}
\maketitle              
\begin{abstract}
	In this paper, we consider three hitting problems
		on a disk intersection graph: \textsc{Triangle Hitting Set},
		\textsc{Feedback Vertex Set}, and \textsc{Odd Cycle Transversal}.
		Given a disk intersection graph $G$, our goal is to compute
		a set of vertices
        hitting all triangles, all cycles, or all odd cycles, respectively.
		Our algorithms run in time
		$2^{\tilde{O}({k}^{4/5})}n^{O(1)}$, $2^{\tilde{O}({k}^{9/10})}n^{O(1)}$, and $2^{\tilde{O}({k}^{19/20})}n^{O(1)}$, respectively, where $n$ denotes the number of vertices of $G$.
		These do not require a geometric representation of a disk graph.
		If a geometric representation of a disk graph is given as input,
		we can solve these problems more efficiently. 
		In this way, we improve the  algorithms
		for those three problem by Lokshtanov et al. [SODA 2022]. 

\keywords{Disk graphs, feedback vertex set, triangle hitting set}
\end{abstract}
\section{Introduction}
In this paper, we present subexponential parameterized algorithms
for the following three well-known parameterized problems on
disk graphs: \trihit, \fvs, and \oct.
Given a graph $G=(V,E)$ and an integer $k$,
these problems ask for finding a subset of $V$
of size $k$ that hits
all triangles, cycles, and odd-cycles, respectively.
On general graphs, the best-known algorithms
for \trihit, \fvs, and \oct{} run in $2.1^kn^{O(1)}$,
$2.7^kn^{O(1)}$, and $2.32^kn^{O(1)}$ time, respectively,
where $n$ denotes the number of vertices of a graph~\cite{li2020detecting,lokshtanov2014faster,wahlstrom2007algorithms}.
All these problems are NP-complete, and moreover,
no $2^{o(k)}n^{O(1)}$ algorithm exists for these problems
on general graphs
unless ETH fails~\cite{pBook}.
This motivates
the study of these problems on special graph classes
such as planar graphs and geometric intersection graphs.

Although these problems are NP-complete even for planar graphs,
much faster algorithms exist for planar graphs. More specifically, they can be solved in time $2^{O(\sqrt k)}n^{O(1)}$,
$2^{O(\sqrt k)}n^{O(1)}$, and $2^{O(\sqrt k\log k)}n^{O(1)}$, respectively, on planar graphs~\cite{demaine2005subexponential,lokshtanov2012subexponential}. Moreover, they are optimal unless ETH fails.
The $2^{O(\sqrt k)}n^{O(1)}$-time algorithms for \trihit{} and \fvs{} follow from the \emph{bidimensionality theory} of Demaine at al~\cite{demaine2005subexponential}.
A planar graph either has an ${r}\times {r}$ grid graph as a minor, or its treewidth\footnote{The definition of the treewidth can be found in Section~\ref{sec:preliminaries}} is $O({r})$. This implies that for a \textsf{YES}-instance $(G,k)$ of \fvs{}, the treewidth of $G$ is $O(\sqrt{k})$.
Then a standard dynamic programming approach gives
a $2^{O(\sqrt k)}n^{O(1)}$-time algorithm for \fvs.
This approach can be generalized for $H$-minor free graphs and bounded-genus graphs.

Recently, several subexponential-time algorithms for cycle hitting problems have been studied on \emph{geometric intersection graphs}~\cite{an2021feedback,de2020framework,deberg_et_al:LIPIcs.ISAAC.2021.22,fomin2019decomposition,fomin2019finding,fomin2012bidimensionality}.
Let $\mathcal D$ be a set of geometric objects such as disks or polygons. The intersection graph $G=(V,E)$ of $\mathcal D$ 
is the graph where
a vertex of $V$ corresponds to an element of $\mathcal D$, and two vertices of $V$ are connected by an edge in $E$
if and only if their corresponding elements in $\mathcal D$ intersect.
In particular, the intersection graph of (unit) disks
is called a \emph{(unit) disk graph},
and the intersection graph of interior-disjoint polygons
is called a \emph{map graph}.
Unlike planar graphs, map graphs and unit disk graphs
can have large cliques.
For unit disk graphs, \fvs{} can be solved in $2^{O(\sqrt{k})}n^{O(1)}$ time~\cite{an2021feedback},
and \oct{} can be solved in
$2^{O(\sqrt{k}\log k)}n^{O(1)}$ time~\cite{unitdisk-oct}. 
For map graphs, \fvs{} can be solved in
$2^{O(\sqrt{k}\log k)}n^{O(1)}$ time~\cite{fomin2019decomposition}.
These are almost tight in the sense that
no $2^{O(\sqrt{k})}n^{O(1)}$-time
algorithm exists unless ETH fails~\cite{fomin2019decomposition,fomin2012bidimensionality}.

However, until very recently,
little has been known for disk graphs, a broad class
of graphs that generalizes planar graphs and unit disk graphs.
Very recently, Lokshtanov et al.~\cite{doi:10.1137/1.9781611977073.80} presented
the first subexponential-time algorithms for cycle hitting problems on disks graphs.
The explicit running times of the algorithms are summarized
in the first row of Table~\ref{tab:results}.
On the other hand, the best-known lower bound
on the computation time for these problems
is $2^{\Omega(k^{1/2})}n^{O(1)}$ assuming ETH.
There is a huge gap between the
upper and lower bounds.

\subsubsection{Our results.}
In this paper, we make progress on the study of
subexponential-time parameterized algorithms on disk graphs
by presenting faster algorithms for the cycle hitting problems.
We say an algorithm is \emph{robust} if this algorithm does not requires a geometric representation of a disk graph. 
We present $2^{\tilde O(k^{4/5})}n^{O(1)}$-time,  $2^{\tilde O(k^{9/10})}n^{O(1)}$-time, and $2^{\tilde O(k^{19/20})}n^{O(1)}$-time robust algorithms for \trihit, \fvs, and \oct, respectively. 
 Furthermore, we present $2^{\tilde O(k^{2/3})}n^{O(1)}$-time, $2^{\tilde O(k^{7/8})}n^{O(1)}$-time, and 
 $2^{\tilde O(k^{15/16})}n^{O(1)}$-time algorithms
 for \trihit, \fvs, and \oct{} which are not robust. These results are summarized in the second and third rows of Table~\ref{tab:results}.

For \trihit, we devised a kernelization algorithm based on a crown decomposition by modifying the the algorithm in~\cite{ABUKHZAM2010524}. 
After a branching process, we obtain a set of $O((k/p)\log k)$
instances $(G',k')$ such that $G'$ has a set of $O(kp)$ triangles, which we call a \emph{core}, for a value $p$. 
Every triangle of  $G'$ 
shares at least two vertices with at least one triangle of a core. 
This allows us to remove several  vertices from $G'$ so that
the number of vertices of $G'$ is $O(kp)$. Then we can obtain a tree decomposition of small treewidth, and then we apply dynamic programming on the tree decomposition. 

For \fvs{} and \oct, 
we give an improved analysis of the algorithms in~\cite{doi:10.1137/1.9781611977073.80} by presenting
improved bounds on one of 
the two main combinatorial results 
presented in~\cite{doi:10.1137/1.9781611977073.80} 
concerning the treewidth of disk graphs (Theorem~2.1). 
More specifically, Theorem~1 of~\cite{doi:10.1137/1.9781611977073.80} says that 
for any subset $M$ of $V$ such that $N(v)\cap M \neq N(u)\cap M$ 
for any two vertices $u$ and $v$ not in $M$, the size of $U$ is $O(|M|\cdot p^6)$, where $p$ is the ply of the disks represented by the vertices of $G$. 
We improve an improved bound of $O(|M|\cdot p^2)$.  
To obtain an improved bound, we classify the vertices of $V-M$ 
into three classes in a more sophisticated way, and then use the concept of the additively weighted higher-order Voronoi diagram. 
We believe the additively weighted higher-order Voronoi diagram will be useful in designing optimal algorithms for \trihit, \fvs, and \oct. Indeed, the Voronoi diagram (or  Delaunay triangulation) is a main tool for obtaining an optimal algorithm for \fvs{} on unit disk graphs~\cite{an2021feedback}.

\begin{table}[t]
    \centering
    \begin{tabular}{|  c c c | c | c|}
    \hline
     \textsc{Triangle Hitting} & \textsc{FVS} & \textsc{OCT} & robust? & \\ 
    \hline\hline
        $2^{ O(k^{9/10}\log k)}n^{O(1)}$ & $2^{ O(k^{13/14}\log k)}n^{O(1)}$ & $2^{ O(k^{27/28}\log k)}n^{O(1)}$ & yes
        & \cite{doi:10.1137/1.9781611977073.80} \\
        [0.5ex]
                 $2^{  O(k^{4/5}\log k)}n^{O(1)}$ & $2^{ O(k^{9/10}\log k)}n^{O(1)}$ &   $2^{ O(k^{19/20}\log k)}n^{O(1)}$
        & yes & this paper \\
         $2^{ O(k^{2/3}\log k)}n^{O(1)}$ & $2^{  O(k^{7/8}\log k)}n^{O(1)}$ &  $2^{ O(k^{15/16}\log k)}n^{O(1)}$ & no 
        & this paper\\
        \hline   
    \end{tabular}
    \vspace{2mm}
    \caption{Comparison of the running times of our algorithms and the algorithm by Lokshtanov et al.~\cite{doi:10.1137/1.9781611977073.80}. 
    All algorithms for \oct{} are randomized. 
    }
    \label{tab:results}
\end{table}

\section{Preliminaries}   \label{sec:preliminaries}
For a graph $G$, we let $V(G)$ and $E(G)$ be the sets of 
vertices and edges of $G$, respectively. For a subset $U$ of $G$, we use $G[U]$ to
denote the subgraph of $G$ induced by $U$. 
Also, for a subset $U\subset V$ of $V$, we simply denote $G[V\setminus U]$ by $G-U$. 
For a vertex $v$ of $G$, let $N(v)$ be the set of 
the vertices of $G$ adjacent to $v$. We call it the \emph{neighborhood} of $v$.

A triangle of $G$ is a cycle of $G$ consisting of three vertices.
We denote a triangle consisting of three vertices $x$, $y$ and $z$ by $\{x,y,z\}$. We sometimes consider it as the set $\{x,y,z\}$ of vertices. For instance, the union of triangles
is the set of all vertices of the triangles.
A subset $F$ of $V$ is called 
a \emph{triangle hitting set} of $G$ if 
 $G[V\setminus F]$ has no triangle.
Also, $F$ is called 
a \emph{feedback vertex set} if 
$G[V\setminus F]$ has no cycle (and thus it is forest). 
Finally, $F$ is called a \emph{odd cycle transversal} if $G[V\setminus F]$ has no odd cycle.
In other words, a triangle hitting set, a feedback vertex set,
and a odd cycle transversal
hit all triangles, cycles and odd cycles, respectively. 
Notice that a feedback vertex set of $G$ 
is also a triangle hitting set.

\subsubsection{Disk graphs.}
Let $G=(V,E)$ be a disk graph defined by a set $\mathcal D$
of disks. 
In this case, we say that $\mathcal D$ is a \emph{geometric representation} of $G$. 
For a vertex $v$ of $G$, we let $D(v)$ denote the disk of $\mathcal D$ represented by $v$. 
The \emph{ply} of $\mathcal D$ 
is defined as the maximum number of disks of $\mathcal D$ containing a common point. Note that the disk graph of a set of disks of ply $p$ 
has a clique of size at least $p$. 
If it is clear from the context, we say that the ply of $G$ is $p$.
The \emph{arrangement} of $\mathcal D$ is the subdivision of the plane
formed by the boundaries of the disks of $\mathcal D$
that consists of vertices, edges and faces.
The \emph{arrangement graph} of $G$, denoted by $\mathcal A(G)$, is the plane graph where every
face of the arrangement of $\mathcal D$ contained in at least one disk is represented by a vertex, and vertices are adjacent if the faces they represent share an edge.
Given a disk graph $G$, it is NP-hard to compute its geometric
representation~\cite{breu1998unit}. 
We say an algorithm is \emph{robust} if this algorithm does not requires a geometric representation of a disk graph. 

\subsubsection{Tree decomposition.}
A \emph{tree decomposition} of an undirected graph $G=(V,E)$
is defined as a pair $(T,\beta)$, where $T$ is a tree and $\beta$ is a mapping from nodes of $T$ to subsets of $V$ (called bags) with the following conditions. Let $\mathcal {B}:=\{\beta(t) : t\in V(T)\}$ be the set of bags of $T$.
	\begin{itemize}
		\item  For $\forall u\in V$, there is at least one bag in $\mathcal {B}$ which contains $u$.
		\item For $\forall (u,v)\in E$, there is at least one bag in $\mathcal {B}$ which contains both $u$ and $v$.
		\item For $\forall u\in V$, the nodes of $T$ containing $u$ in their bags are connected in $T$.
	\end{itemize}
	The $\emph{width}$ of a tree decomposition is defined as the size of its largest bag minus one, and the $\emph{treewidth}$ of $G$ is
	the minimum width of a tree decomposition of $G$.
	The treewidth of a disk graph $G$ is $O(p\sqrt{|V(G)|})$, where $p$ is the ply
	of $G$~\cite{doi:10.1137/1.9781611977073.80}.

\subsubsection{Weighted Treewidth of a Disk Graph}
\label{sec:weighted-tw}
One of the ingredients of our algorithms is to partition the vertices
in each bag of a tree decomposition of $G$ into cliques.
This technique was already used in~\cite{de2020framework}
to design ETH-tight (non-parameterized) algorithms for various problems on geometric intersection graphs.
To use this observation, they introduced
the notion of \emph{clique-weighted treewidth}.

We fix a geometric representation of $G$. 
For a subset $U$ of $V$, 
we say the partition $\{C_1,\ldots C_\ell\}$ of $U$ is a \emph{clique partition} if 
all disks represented by the vertices of $C_i$
contain a common point in the plane 
for each index $i$.
We define the \emph{clique-weight}
of $U$ as the minimum of  $\sum_{1\leq i\leq \ell}(\log |C_i|+1)$  over all
 clique partitions $\{ C_1,\ldots C_\ell\}$ of $U$.
Since a triangle hitting set contains
all except for at most two vertices of $C_i$,
the number of 
intersections between $U$ and a triangle hitting set is $\prod_{1\leq i\leq t} O(|C_i|^2)=2^{O(w)}$, where $w$ is the clique-weight of $U$.

A \emph{balanced separator} $S_\text{sep}$ of $G$ is a subset of vertices
such that each connected component of $G-S_\text{sep}$
has complexity $c|V|$ for a constant $c<1$.
De berg et al.~\cite{de2020framework} showed that
a geometric intersection graph of fat objects
has a balanced separator of small clique-weight.
\begin{lemma}[\cite{de2020framework}]\label{lem:deberg-separator}
 A disk graph $G$ has a balanced separator $F_\text{sep}$ and a clique partition of $F_\text{sep}$ with weight $O(\sqrt n)$. Moreover, if a geometric representation of $G$ is given, an $F_\text{sep}$ and its clique partition  can be computed in polynomial time.
\end{lemma}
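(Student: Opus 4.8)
The plan is to realize $F_\text{sep}$ as the set of disks crossed by a single balanced separating circle, and then to bound its clique-weight by a size-classification argument. First I would apply a classical geometric separator theorem to the disk \emph{centers}: lifting the centers and taking a random great-circle cut in the style of Miller--Teng--Thurston--Vavasis yields a circle $S$ in the plane such that at most $\tfrac34 n$ of the centers lie strictly inside $S$ and at most $\tfrac34 n$ lie strictly outside. I define $F_\text{sep}$ to be the set of disks whose boundary meets $S$. If $D(u)$ lies entirely inside $S$ and $D(v)$ entirely outside, then $S$ separates them in the plane and hence $D(u)\cap D(v)=\emptyset$; thus $G-F_\text{sep}$ has no edge between the inside vertices and the outside vertices, so every connected component of $G-F_\text{sep}$ lies on a single side and has at most $\tfrac34 n$ vertices. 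This settles the balance condition with $c=\tfrac34$, and it reduces the lemma to exhibiting a good clique partition of $F_\text{sep}$.

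For the clique partition I would group the crossing disks into size classes $\mathcal D_j=\{D : \text{radius of } D\in[2^j,2^{j+1})\}$. A disk of radius at most $2^{j+1}$ whose boundary meets $S$ has its center within distance $2^{j+1}$ of $S$, so the centers of $\mathcal D_j$ lie in an annular neighborhood of $S$ of width $O(2^j)$; overlaying a grid of cells of side $2^j$ covers this neighborhood by $O(R/2^j)$ cells, where $R$ is the radius of $S$. The key geometric fact is that within one cell the disks of $\mathcal D_j$ form a clique: any two of them have radius at least $2^j$, while their centers are at distance at most the cell diagonal $\sqrt2\cdot 2^j<2^{j+1}$, so the disks intersect and share a common point. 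Hence each nonempty (cell, size-class) pair contributes a single clique to a partition of $F_\text{sep}$, following the construction of de Berg et al.~\cite{de2020framework}.

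It then remains to bound the clique-weight $\sum_i(\log|C_i|+1)$ by $O(\sqrt n)$, and I expect this accounting to be the main obstacle, since two opposing effects must be reconciled. Geometrically, size class $j$ supplies at most $O(R/2^j)$ cliques, a bound that is excellent for the large disks but diverges for the small, numerous ones; combinatorially, class $j$ contains only $|\mathcal D_j|$ disks and so supplies at most $|\mathcal D_j|$ cliques, which is precisely the bound one wants for small disks. Taking the minimum of the two estimates in each class and choosing the scale of the separating circle (equivalently, a random scale for the grid) so that the geometric and combinatorial regimes cross over at the right size class is what caps the total number of cliques at $O(\sqrt n)$; the logarithmic contributions must then be controlled using that the cliques partition at most $n$ disks together with the size-class structure, and making these estimates simultaneously tight is exactly where the care lies. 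Finally, every ingredient is constructive once a geometric representation is available: the separating circle is computable in (randomized) polynomial time, the grid overlay and the assignment of disks to cells are immediate, and each cell-clique is read off directly, which yields the claimed polynomial-time computation of $F_\text{sep}$ and its clique partition.
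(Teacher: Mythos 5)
First, a point of reference: the paper itself gives no proof of this lemma --- it is imported verbatim from de~Berg et al.~\cite{de2020framework} --- so your proposal must be measured against that paper's argument. The second half of your blueprint does match theirs: partitioning the crossing disks into radius classes $\mathcal D_j$, overlaying a grid of side $2^j$ per class, and observing that all class-$j$ disks with centers in one cell contain a common point (the cell center works, which is what the paper's definition of clique partition actually requires --- pairwise intersection, which is what you argue, is not formally enough) is exactly their clique-grouping. Two remarks on the first half: $F_\text{sep}$ must consist of the disks that \emph{intersect} the circle $S$, not those whose \emph{boundary} meets $S$; a disk containing $S$ in its interior escapes your definition yet joins the inside to the outside, breaking the balance claim. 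That is easily repaired.

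The genuine gap is the step you explicitly defer: showing that the separating curve can be chosen so that the clique-weight is $O(\sqrt n)$. A single Miller--Teng--Thurston--Vavasis circle chosen to balance the \emph{centers} gives no control whatsoever over your bound $\sum_j \min\bigl(O(R/2^j),\,|\mathcal D_j|\bigr)$: the MTTV crossing guarantee ($O(\sqrt{kn})$ crossings) is valid only for $k$-ply systems, and the whole point of this lemma is that the disks have arbitrary radii and unbounded ply. For a center-balancing circle, $R$ can be enormous while $\Theta(n)$ small disks lie along $S$, so both estimates in your minimum are useless ($R/2^j$ diverges, $|\mathcal D_j|=\Theta(n)$), and no after-the-fact choice of ``scale'' of one random great circle repairs this --- balance and low crossing weight must be achieved \emph{simultaneously}, and that coupling is the actual content of the lemma. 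De~Berg et al.\ obtain it by an averaging argument over a family of candidate separators: they fix a smallest square $H$ containing a constant fraction of the disks, consider $\Theta(\sqrt n)$ concentric scaled copies of its boundary between $H$ and $2H$ (each of which is automatically balanced), observe that a disk of diameter $d$ crosses at most $O\bigl(d\sqrt n/\mathrm{side}(H)+1\bigr)$ of these boundaries, and charge the weight (including the logarithmic terms) of every disk to the boundaries it crosses; averaging then yields one boundary of clique-weight $O(\sqrt n)$. Without this family-plus-charging step (or an equivalent), your proof does not close, and the polynomial-time claim inherits the same hole since there is nothing concrete to compute.
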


For a tree decomposition $(T,\beta)$ of a graph, we define the clique-weighted width of $(T,\beta)$
as the maximum weight of a bag of the decomposition.
Then the clique-weighted treewidth of a graph is defined as
the minimum clique-weighted width of a tree decomposition of the graph.
In the following, we simply use the term \emph{weight} (and \emph{weighted width}) instead of clique-weight (and \emph{clique-weighted width}), if it is clear from the context.
Then Lemma~\ref{lem:deberg-separator} implies the following lemma.

\begin{lemma}\label{lem:tw-without-ply}
 The weighted treewidth of a disk graph $G$ is $O(\sqrt n)$. Given a geometric representation of $G$, we can compute a tree decomposition of $G$ with weighted treewidth $O(\sqrt n)$ in
 polynomial time.
\end{lemma}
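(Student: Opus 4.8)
The plan is to build the tree decomposition by recursively applying the balanced separator of Lemma~\ref{lem:deberg-separator}, a standard divide-and-conquer scheme, and to control the weight of each bag through the geometric decay of the subgraph sizes. The first observation I would record is that every induced subgraph $G[A]$ of a disk graph $G$ is itself a disk graph, realized by the sub-collection of disks $\{D(v): v\in A\}$. Hence Lemma~\ref{lem:deberg-separator} applies verbatim to $G[A]$ and yields, in polynomial time, a balanced separator $F$ together with a clique partition of weight $O(\sqrt{|A|})$, where each connected component of $G[A]-F$ has at most $c|A|$ vertices for the fixed constant $c<1$. Throughout I write $w(\cdot)$ for the clique-weight of a vertex set.

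I would construct the decomposition top-down through a recursive routine $\mathsf{TD}(A,S)$, where $A\subseteq V$ is the vertex set of the current induced subgraph and $S$ is a \emph{boundary} set consisting of separator vertices inherited from ancestor calls. The routine computes a separator $F$ of $G[A]$ as above, creates a node whose bag is $S\cup F$, and then recurses on each connected component $A_i$ of $G[A]-F$ with the updated boundary $S_i=(S\cup F)\cap N(A_i)$, where $N(A_i)$ denotes the set of vertices outside $A_i$ adjacent to $A_i$; the returned subtrees are attached as children, and the recursion stops once $|A|$ is bounded by a constant. The top-level call is $\mathsf{TD}(V,\emptyset)$.

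The heart of the argument is the weight bound on each bag $S\cup F$. Along any root-to-node path the separators produced at successive calls lie in strictly nested vertex sets and are therefore pairwise disjoint, so the boundary $S$ at a node is contained in the disjoint union of these ancestor separators. Because a clique partition of a disjoint union is obtained by concatenating the individual clique partitions, the weight is subadditive, and hence $w(S)$ is at most the sum of the weights of the ancestor separators. If the node is handled on a subgraph of size $m_0$, its ancestors are handled on subgraphs of sizes $m_1,\dots,m_L=n$ with $m_{j-1}\le c\,m_j$, so $w(S)\le \sum_{j\ge 1}O(\sqrt{m_j})$. Since the $m_j$ grow by a factor of at least $1/c$ as one ascends the path, this is a geometric series whose largest term is $\sqrt{n}$, and it therefore sums to $O(\sqrt n)$. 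Adding the current separator weight $w(F)=O(\sqrt{|A|})=O(\sqrt n)$, every bag has weight $O(\sqrt n)$, which is exactly the weighted-width bound we want.

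It then remains to check that the output $(T,\beta)$ is a genuine tree decomposition and that the whole process runs in polynomial time. Vertex and edge coverage together with the connectivity condition follow from the update rule $S_i=(S\cup F)\cap N(A_i)$: a boundary vertex is retained exactly as long as the shrinking component it borders remains adjacent to it, so the bags containing any fixed vertex form a connected subtree, and for each edge both endpoints occupy the bag at which the deeper endpoint first enters a separator. The recursion depth is $O(\log n)$ since subgraph sizes shrink by the factor $c$, each vertex becomes a separator vertex at most once so there are only $O(n)$ nodes, and each separator is computed in polynomial time by Lemma~\ref{lem:deberg-separator}; hence the decomposition is produced in polynomial time. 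I expect the main obstacle to be the weight accounting of the third paragraph, specifically verifying that the inherited boundary genuinely decomposes into pairwise disjoint ancestor separators so that subadditivity applies and the geometric series collapses to $O(\sqrt n)$; the verification of the tree-decomposition axioms under the boundary-update rule is routine but must be carried out carefully.
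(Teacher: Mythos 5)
Your proposal is correct and follows essentially the same route as the paper: both recursively apply the balanced separator of Lemma~\ref{lem:deberg-separator} to build the tree, and both bound each bag's weight by the sum of ancestor-separator weights, which collapses to $O(\sqrt n)$ by the geometric decay of subgraph sizes. The only difference is bookkeeping---the paper pushes each separator into \emph{every} bag of the subtree below it, while you retain only the separator vertices adjacent to the current component---but this does not change the key lemma, the structure of the decomposition, or the weight analysis.
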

\begin{proof}
    We construct a tree decomposition $(T,\beta)$ recursively for a disk graph $G$ with its geometric representation.
    Our construction satisfies that a subtree of $T$ rooted at a node $t$ together with 
    the bags corresponding its nodes 
    forms a tree decomposition of a subgraph of $G$, say $G_t$. 
    
    Our construction starts from the root node $r$ of $T$ with $G_r=G$. For a current node $t$ of $T$, we set $\beta(t)=V(G_t)$ if the number of vertices of $V(G_t)$ is $O(\sqrt {n})$. 
    Otherwise, we find a balanced separator $F_\text{sep}$ of $G_t$ of size $O(\sqrt{|V(G_t)|})$ and its clique partition with weight $O(\sqrt {n})$ in polynomial time by Lemma~\ref{lem:deberg-separator}. 
    For each connected component $V_i$ of $G_t-F_\text{sep}$, we 
    recursively construct a tree decomposition $(T_i,\beta_i)$ of $G[V_i]$. Then we connect the root of $T_i$ to $t$ so that it becomes a child of $t$. 
    In addition to this, we add the vertices of $F_\text{sep}$ to the bag of every node in the subtree rooted at $t$ including $t$ itself. 
    There is no edge between two different components $V_i$ and $V_j$. Thus, $(T,\beta)$ is a tree decomposition of $T$.  Furthermore, its weight treewidth is $O(\sqrt n)$ because $F_\text{sep}$ is a balanced separator. 
\end{proof}

\subsubsection{Bounded Ply.}
If the ply of a geometric representation $\mathcal D$ of $G$ is bounded,
we can obtain a better bound on the weighted treewidth of $G$. 

\begin{lemma}\label{lem:tw-arrangment}
    The weighted treewidth of $G$ is at most $\log p$ times the treewidth of $\mathcal A(G)$. Furthermore, the treewidth of $G$ is at most $p$ times the treewidth of $\mathcal A(G)$.
\end{lemma}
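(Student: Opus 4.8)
The plan is to construct a tree decomposition of $G$ by lifting a tree decomposition $(T,\beta)$ of the arrangement graph $\mathcal{A}(G)$ along the natural map sending each disk to the faces it contains. Concretely, I fix a geometric representation $\mathcal{D}$ of $G$ and, for each vertex $v$ of $G$, let $A_v \subseteq V(\mathcal{A}(G))$ be the set of arrangement faces (equivalently, vertices of $\mathcal{A}(G)$) that are contained in $D(v)$. The whole argument rests on two properties of this map. First, $A_v$ is \emph{connected} in $\mathcal{A}(G)$: the faces contained in $D(v)$ tile the connected region $D(v)$, and any two of them are joined by a path inside $D(v)$ that crosses shared arrangement edges, which is exactly a path in $\mathcal{A}(G)[A_v]$. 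Second, if $uv \in E(G)$ then $A_u \cap A_v \neq \emptyset$: since $D(u)$ and $D(v)$ overlap in a region of nonempty interior (I assume general position so that no two disk boundaries are tangent; tangencies can be removed by an infinitesimal perturbation that does not change $G$), that region contains at least one arrangement face, which lies in both $A_u$ and $A_v$.

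Given these, I define the lifted decomposition $(T,\beta')$ by $\beta'(t) = \{\, v \in V(G) : A_v \cap \beta(t) \neq \emptyset \,\}$, i.e.\ a vertex of $G$ is placed in every bag that meets its face-set. I then verify the three tree-decomposition axioms. Every $v$ appears in some bag because $A_v$ is nonempty and each of its faces appears in some bag of $(T,\beta)$. Every edge $uv$ is covered because a common face $f \in A_u \cap A_v$ lies in some bag $\beta(t)$, forcing $u,v \in \beta'(t)$. For connectivity, note that for each face $f$ the set $\{t : f \in \beta(t)\}$ is a subtree of $T$, and for any edge $ff'$ of $\mathcal{A}(G)$ with $f,f' \in A_v$ these two subtrees share a node (the edge is covered by $(T,\beta)$); since $A_v$ is connected, the union $\{t : A_v \cap \beta(t)\neq\emptyset\}$ of these pairwise-overlapping subtrees along a connected subgraph is again connected. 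Hence $(T,\beta')$ is a valid tree decomposition of $G$.

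It remains to bound the width, for which the key quantitative fact is that each arrangement face $f$ is contained in at most $p$ disks: the set of disks containing $f$ is constant over the face (no boundary crosses $f$), and any interior point of $f$ witnesses the ply bound. For the unweighted bound, this gives $|\beta'(t)| \le p\,|\beta(t)|$, since each $v \in \beta'(t)$ contains some face of $\beta(t)$; as $|\beta(t)| \le \mathrm{tw}(\mathcal{A}(G))+1$, the width of $(T,\beta')$ is at most $p\cdot\mathrm{tw}(\mathcal{A}(G))$ up to the usual additive offset. For the weighted bound I exhibit a cheap clique partition of each bag: for every face $f \in \beta(t)$, the disks containing $f$ all share a common point and hence form a clique $C_f$ of size at most $p$. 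Assigning each $v \in \beta'(t)$ to one face of $\beta(t) \cap A_v$ yields a partition of $\beta'(t)$ into at most $|\beta(t)|$ cliques, each of size at most $p$, so its weight is at most $|\beta(t)|\,(\log p + 1)$. This gives weighted width at most $(\log p)\cdot \mathrm{tw}(\mathcal{A}(G))$ up to offsets, proving both statements.

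The routine parts are the three tree-decomposition axioms; the steps I expect to require the most care are the connectivity of $A_v$ together with the connectivity axiom (which is the one place the tree structure of $T$ is genuinely used), and the bookkeeping that turns the face-indexed cliques $C_f$ into an honest \emph{partition} of the bag so as to get a factor of $\log p$ rather than $p$ in the weighted estimate. The tangency assumption is the only genuinely delicate modelling point, and I would dispose of it by a general-position argument at the outset.
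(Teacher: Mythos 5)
Your proposal is correct and follows essentially the same route as the paper's proof: lift a tree decomposition $(T,\beta)$ of $\mathcal{A}(G)$ to $(T,\beta')$ by placing in $\beta'(t)$ every vertex whose disk contains a face of $\beta(t)$, then bound the (weighted) width using the fact that the at most $p$ disks containing a common face form a clique. The only difference is one of detail: the paper dismisses the verification of the tree-decomposition axioms with ``it is not difficult to see,'' whereas you supply the missing arguments (connectivity of the face-set $A_v$, the subtree-union argument, the explicit clique partition, and the tangency/general-position caveat), all of which are sound.
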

\begin{proof}
    Let $(T,\beta)$ be a tree decomposition of $\mathcal A(G)$.
    We can construct a tree decomposition $(T, \beta')$ of $G$ as follows.
    Each vertex of $\mathcal A(G)$ corresponds to a face of the arrangement of $\mathcal D$. Moreover,
    the disks of $\mathcal D$ containing a common face of the arrangement
    forms a clique in $G$.
    For each node $t\in T$, let $\beta'(t)$ be the set
    of the vertices corresponding to disks of $\mathcal D$  containing faces corresponding to
    the vertices of $\beta(t)$.
    It is not difficult to see that $(T,\beta')$ is a tree decomposition of $G$.
    Then the weighted width (and treewidth) of $\beta'(t)$ is $\log p$ (and $p$) times the width of $\beta(t)$. 
    This implies that the weighted treewidth (and treewidth) of $G$ is at most $\log p$ (and $p$) times the treewidth of $\mathcal A(G)$.
\end{proof}

Given a tree decomposition of $G$ with weighted width $w$, \fvs{} can be solved in $2^{O(w)}n$ time
using a rank-based approach as observed in~\cite{de2020framework}.\footnote{De berg et al.~\cite{de2020framework} showed that \fvs{} and several problems can be solved in $2^{O(\sqrt{n})}$ time for similarly sized objects. Although the other algorithms cannot be extended to an intersection graph of fat objects, the algorithm for \fvs{} can be extended to an intersection graph of fat object. We briefly show how to solve \fvs{} in Lemma~\ref{lem:dp-fvs}.}
Similarly, \trihit{} and \oct{}
can be solved in $2^{O(w)}n$ and $2^{O(w\log w)}n$ time, 
respectively. They can be obtained by slightly 
modifying standard dynamic programming algorithms 
for those problems as observed in~\cite{de2020framework}. We briefly show how to do this in Section~\ref{sec:dp}.

\subsubsection{Higher-Order Voronoi Diagram.}
   To analyze the treewidth of $G$ of bounded ply, we use the concept of the \emph{higher-order Voronoi diagram}. Given a set of $n$ weighted sites (points),  
    its order-$k$ Voronoi diagram is defined as
    the subdivision of $\mathbb{R}^2$ into maximal regions
    such that all points within a given region have the same $k$ nearest sites. 
    Here, the distance between a site $s$ with weight $w$ and a point $x$ in the plane 
    is defined as $d(s,x) - w$, where $d(s,x)$ is the Euclidean distance between $s$ and $x$. 
    \begin{lemma}[Theorem 4 in~\cite{rosenberger1991order}]\label{lem:complexity}
    The complexity of the addictively weighted order-$k$ Voronoi diagram of $m$ point sites in the plane is $O(mk)$.
    \end{lemma}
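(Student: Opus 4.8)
The plan is to pass to the three-dimensional lifting that is standard for additively weighted Voronoi diagrams. First I would associate with each weighted site $s$ the distance surface $z = \|x - s\| - w(s)$ over the plane, which is a circular cone of fixed opening angle whose apex sits at $(s, -w(s))$; thus the $m$ sites give $m$ congruent, translated cones. Under this lifting the order-$k$ diagram is exactly the vertical projection of the $k$-level of the arrangement of these cones: a point $x$ lies in the cell of a $k$-set $H$ precisely when the cones of $H$ are the $k$ lowest above $x$, and a vertex of the order-$k$ diagram is the projection of a point lying on three of the cones with exactly $k-1$ cones strictly below it. Because the diagram is a planar subdivision in which, in general position, every vertex has degree three, Euler's formula makes the numbers of vertices, edges and faces mutually proportional, so it suffices to bound one of them, say the number of vertices, by $O(mk)$.

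Next I would bound these level-$(k-1)$ triple points. The seed fact is that the order-$1$ additively weighted diagram (the Apollonius diagram) has linear complexity $O(m)$, equivalently the lower envelope of the $m$ cones has $O(m)$ vertices, and any two cones meet in a single bisector curve, a hyperbolic arc, so that each triple of sites contributes only $O(1)$ candidate vertices. With a bounded defining-set size (here three) and a linear level-$0$ bound in hand, one can run the Clarkson--Shor random-sampling argument on a sample of about $m/k$ sites and combine it with a charging argument over the diagrams of successive orders, in the spirit of Lee's treatment of the unweighted case, to obtain the bound $O(mk)$, or more precisely $O(k(m-k))$, on the number of order-$k$ vertices.

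The main obstacle, and the reason this is not an immediate corollary of generic arrangement bounds, is twofold. A naive $(\le k)$-level estimate for three-dimensional surfaces only gives $O(mk^2)$; shaving the extra factor of $k$ down to the claimed $O(mk)$ relies on the special geometry that all the cones are congruent translates of one another, which is exactly what keeps the cells of every order connected and each pair of adjacent cells sharing a single bisector arc. The genuinely delicate part is therefore carrying the classical straight-bisector incremental argument through for additively weighted sites, whose bisectors are hyperbolic arcs rather than line segments: one must verify that two such arcs cross at most a constant number of times, that cells do not fragment, and that the recurrences relating the order-$(k-1)$, order-$k$ and order-$(k+1)$ diagrams still close up. Once these structural facts about weighted bisectors are in place, the counting that produces $O(mk)$ follows the unweighted template.
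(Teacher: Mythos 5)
You should first note that the paper contains no proof of this statement at all: Lemma~\ref{lem:complexity} is imported verbatim as Theorem~4 of~\cite{rosenberger1991order}, so the only meaningful benchmark is Rosenberger's external proof, which your sketch indeed shadows (lift to congruent cones, then adapt Lee's successive-order argument for unweighted order-$k$ diagrams to hyperbolic bisectors). Your setup is sound as far as it goes: the cone lifting is correct, the order-$1$ (Apollonius) diagram does have complexity $O(m)$, two weighted bisectors are hyperbola branches crossing a constant number of times, and you are right that a generic $(\leq k)$-level bound via Clarkson--Shor only yields $O(mk^2)$, so the special congruent-cone structure must be exploited.

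The genuine gap is that the step which actually produces $O(mk)$ is never executed; deferring it to ``the unweighted template'' is deferring exactly the content of the theorem. Concretely: (i) your characterization of order-$k$ vertices counts only triple points with exactly $k-1$ cones strictly below, but order-$k$ diagrams have a second vertex type with $k-2$ cones strictly below (the ``old'' vertices), and Lee's recurrence is precisely the bookkeeping between the two types across consecutive orders --- omitting one type both undercounts and removes the telescoping that drives the proof; (ii) the structural facts you flag as ``must verify'' --- connectivity of every nonempty order-$j$ cell, and the fact that the order-$(j+1)$ diagram restricted to an order-$j$ cell is an order-$1$ diagram of the locally active sites so that the recurrence closes --- are nontrivial for additively weighted sites (the unweighted connectivity argument uses star-shapedness of cells about their sites, which fails for hyperbolic bisectors, and Apollonius cells can even be empty); establishing them is essentially what Rosenberger's paper does; (iii) the Clarkson--Shor sampling step you propose contributes nothing to the final bound, since it cannot give a per-level estimate better than $O(mk^2)$, so it is a red herring rather than an ingredient to be ``combined'' with the charging. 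As it stands, the proposal is a correct roadmap with the correct obstacles identified, but not a proof of the $O(mk)$ bound.
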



\section{Triangle Hitting Set}\label{sec:trihit}
In this section, we present a robust 
$2^{O(k^{4/5}\log k)}n^{O(1)}$-time algorithm for \trihit{}, which 
improves the $2^{O(k^{9/10}\log k)}n^{O(1)}$-time algorithm of~\cite{doi:10.1137/1.9781611977073.80}.


\subsection{Two-Step Branching Process}\label{sec:tri-branching}
We first apply a branching process as follows
to obtain $2^{O((k/p)\log k)}$ instances
one of which is  a \textsf{YES}-instance $(G',k')$ of \trihit{} where every clique of $G'$ has size $O(p)$.
For a clique $C$ of $G$ and a triangle hitting set $F$,
all except for at most two vertices of $C$ are contained in  $F$.
In particular, if $G$ has a triangle hitting set of size $k$, any clique of $G$
has size at most $k+2$.
For a clique of size at least $p$, we branch on which vertices of the clique are not included in a triangle hitting set. After this, the solution size $k$ decrease by at least $p-2$.
We repeat this until every clique has size $O(p)$.
In the resulting branching tree, every node has at most  $O(k^2)$ children since any clique of $G$ has size at most $k+2$.
And the branching tree has height $O(k/p)$.
In this way,
we can obtain $2^{O((k/p)\log k)}$ instances of \trihit{}
one of which is  a \textsf{YES}-instance $(G',k')$ of \trihit{} where $G'$ has a geometric representation of ply at most $p$. 
Moreover, $G'$ is an induced subgraph of $G$, and $k'\leq k$.
Using the EPTAS for computing a maximum clique in a disk graph~\cite{bonamy2021eptas}, we can complete the branching step in $2^{O((k/p)\log k)}n^{O(1)}$ time. 
Note that the algorithm in~\cite{bonamy2021eptas} does not require a geometric representation of a graph.

\begin{figure}[t]
    \centering
    \includegraphics[width=0.9\textwidth]{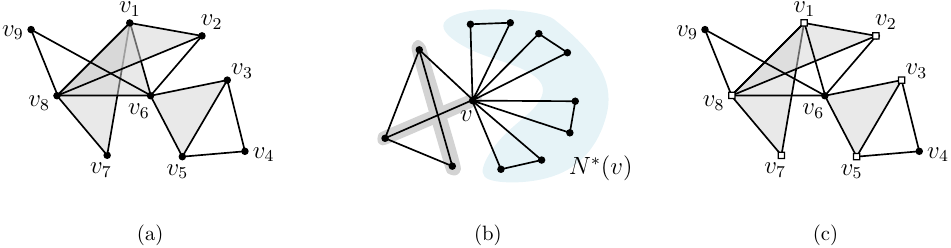}
    \caption{\small (a)
    The four gray triangles,
    $v_1v_2v_8, v_1v_6v_8, v_1v_7v_8$ and $v_3v_5v_6$,
    form a core.
    A triangle of $G$ not in the core shares two vertices with at least one gray triangle. (b) The marked edges are colored gray.
    Then $N^*(v)$ has a matching of size four.
    (c) The vertices in the initial set of $F$ are $v_1,v_2,v_3,v_5,v_7,v_8$. We add two three gray triangles $v_1v_2v_8, v_1v_7v_8, v_3v_5v_6$ to $W$. 
    Note that $W$ is not yet a core because of $v_6v_8v_9$.
    }
    \label{fig:core}
\end{figure}

For each instance $(G,k)$ obtained from the first branching,
we apply the second branching process to obtain
$2^{O(k/p)}$ instances
one of which is a \textsf{YES}-instance having
a \emph{core} of size $O(pk)$.
A set $W$ of triangles of $G$ is called a \emph{core}
if
for a triangle of $G$, a triangle of $W$ shares at least two vertices
with the triangle.
See Fig.~\ref{fig:core}(a).
During the branching process, we mark an edge
to remember that one of its endpoints must be added to a triangle hitting set.
The marking process has an invariant that no two marked edges share a common endpoint, and thus the number of
marked edges is at most $k$.
The marks will be considered in the dynamic programming procedure. Initially, all edges are unmarked.

Let $v$ be a vertex such that $N^*(v)$ has a matching of size
at least $p$, where $N^*(v)$ be the set of neighbors of $v$ not incident to any marked edge. See Fig.~\ref{fig:core}(b).
In this case, a triangle hitting set contains either $v$
or
at least one endpoint of each edge in the matching.
We branch on whether or not $v$ is added to a triangle hitting set.
For the first case, we remove $v$ and its adjacent edges, and decrease $k$ by one.
For the second case, we know that at least
one endpoint of each edge in the matching must be contained in a triangle hitting set.
However, we do not make a decision at this point.
Instead, we simply mark all such edges.

\begin{lemma}\label{lem:branching}
    The total number of instances from the
    two-step branching process is
    $2^{O((k/p)\log k)}$. Moreover, the branching process runs in $2^{O((k/p)\log k)}n^{O(1)}$ time.
\end{lemma}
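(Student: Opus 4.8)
The plan is to bound the two branching steps separately and multiply the resulting leaf counts. For the first branching, I would observe that each branching node acts on a clique $C$ of size at least $p$ and at most $k+2$ (a larger clique would preclude a triangle hitting set of size $k$), and branches over the at most two vertices of $C$ kept outside the solution; this gives at most $\binom{|C|}{\le 2}=O(k^2)$ children per node. Since every such branch moves at least $|C|-2\ge p-2$ vertices into the solution, the parameter $k$ drops by at least $p-2$ along every tree edge, so the branching tree has height $O(k/p)$. Hence the first branching produces $(O(k^2))^{O(k/p)}=2^{O((k/p)\log k)}$ instances.

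For the second branching, I would attach to each node two quantities: the current parameter $k'$ and the number $m$ of marked edges. I'd first record the invariant that the marked edges form a matching, so they force $m$ pairwise distinct vertices into the solution; these are moreover distinct from the vertices already removed into the solution, whence $m\le k'$ holds throughout and the potential $\mu:=k'-m$ is always nonnegative and starts at most $k$. In the branch that puts $v$ into the solution, $k'$ drops by one and $m$ is unchanged, so $\mu$ drops by one; in the branch that keeps $v$ out we mark the at least $p$ edges of the matching in $N^*(v)$, which are disjoint from all previously marked edges by the definition of $N^*(v)$, so $m$ increases by at least $p$ and $\mu$ drops by at least $p$ (branches that would force $m>k'$ are infeasible and are pruned).

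Then I would count leaves directly. A root-to-leaf path with $i$ \emph{include} steps and $j$ \emph{exclude} steps decreases $\mu$ by at least $i+jp$, so $i+jp\le\mu_0\le k$; in particular $j\le k/p$. The number of leaves with prescribed $i,j$ is the number of interleavings $\binom{i+j}{j}$, which is at most $\binom{k+k/p}{k/p}\le (O(p))^{k/p}=2^{O((k/p)\log p)}$ by $\binom{n}{r}\le(en/r)^r$. Summing over the polynomially many pairs $(i,j)$ leaves the bound at $2^{O((k/p)\log p)}$ per first-branching instance, and since $p\le k$ this is $2^{O((k/p)\log k)}$.

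Multiplying the two counts gives $2^{O((k/p)\log k)}$ instances in total. For the running time, each node of the first branching invokes the robust EPTAS for maximum clique of~\cite{bonamy2021eptas} to locate a clique of size $\ge p$ or certify that none exists, and each node of the second branching computes, for every vertex $v$, a maximum matching in $G[N^*(v)]$ to test whether it has size $\ge p$; both tasks run in $n^{O(1)}$ time. As the whole branching tree has $2^{O((k/p)\log k)}$ nodes, the total time is $2^{O((k/p)\log k)}n^{O(1)}$. The main obstacle is the second branching: the \emph{include} direction may be taken $\Theta(k)$ times along a single path, so a naive depth bound only yields $2^{O(k)}$ leaves; the essential point is to charge these cheap steps to the potential $\mu=k'-m$ and to exploit that each \emph{exclude} step consumes at least $p$ units of a budget of size $k$, which is exactly what the binomial estimate above quantifies.
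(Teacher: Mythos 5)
Your proof is correct and takes essentially the same approach as the paper: the first branching is bounded exactly as there ($O(k^2)$ children per node, depth $O(k/p)$), and your second-branching count via the potential $\mu=k'-m$ and interleaving binomials is just the explicit solution of the paper's recurrence $N(k',m')\le N(k'-1,m')+N(k',m'+p)$ with boundaries at $k'=0$ and $m'=k$. If anything, your counting is slightly more careful: it yields $2^{O((k/p)\log p)}$ for the second step, whereas the paper's intermediate claim of $2^{O(k/p)}$ is a bit overstated, though both conclusions fall safely within the lemma's $2^{O((k/p)\log k)}$ bound.
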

\begin{proof}
    Recall that the first branching step produces $2^{O((k/p)\log k)}$ instances.
    Let $(G',k')$ be an instance of \trihit{} we produce
    during the second branching process.
    Let $N(k',m')$ be the number of instances produced by $(G',k')$ where $G'$ has $m'$ marked edges.
    By construction, we produce two instances: one has
    parameter $k'-1$, and one has at least $m'+p$ marked edges.
    Thus we have
    \[N(k', m')\leq N(k'-1, m') + N(k', m'+p).\]
    Since $N(0,\cdot)=0, N(\cdot,k)=0$, we have
    $N(k,0)=2^{O(k/p)}$. Therefore, the total number of instances obtained from the two-step branching process
    is $2^{O((k/p)\log k)}$. Moreover, since our algorithm runs in  polynomial time per instance, the total running time is $2^{O((k/p)\log k)}n^{O(1)}$.
\end{proof}

This branching process was already used in~\cite{doi:10.1137/1.9781611977073.80}. 
The following lemma is a key for our improvement over~\cite{doi:10.1137/1.9781611977073.80}.  
\begin{lemma}\label{lem:core}
    Let $(G,k)$ be a \textsf{YES}-instance obtained from the two-step branching. Then $G$ has a core of size $O(pk)$, and we can compute one in polynomial time.
\end{lemma}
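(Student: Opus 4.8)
The plan is to take as the core $W$ a \emph{maximal edge-disjoint packing of triangles}: starting from $W=\emptyset$, I would repeatedly add any triangle of $G$ that shares no edge with a triangle already in $W$, until no such triangle remains. This is computable in polynomial time, since there are $O(n^3)$ candidate triangles and testing edge-disjointness against the current $W$ is trivial. Moreover $W$ is automatically a core: if some triangle $T$ shared at most one vertex, hence no edge, with every triangle of $W$, then $W\cup\{T\}$ would still be edge-disjoint, contradicting maximality. So every triangle of $G$ shares an edge (two vertices) with some triangle of $W$, and it remains only to prove $|W|=O(pk)$.

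Two structural facts drive the counting. First, since the marked edges form a matching and every triangle hitting set must contain an endpoint of each marked edge, a hitting set of size at most $k$ (which exists, as $(G,k)$ is a \textsf{YES}-instance) forces the number of marked edges to be at most $k$; writing $M$ for the set of endpoints of marked edges, $|M|\le 2k$. Second, because $W$ is edge-disjoint, any two triangles of $W$ through a common vertex $v$ can share no other vertex (that would be a second shared edge at $v$); hence the triangles of $W$ through $v$ are vertex-disjoint apart from $v$, and their \emph{opposite} edges (the edge of each triangle not incident to $v$) form a matching inside $G[N(v)]$. When those opposite edges avoid $M$ they lie in $G[N^*(v)]$, whose matching number is at most $p-1$ by the termination condition of the second branching step.

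I would then split $W$ according to how many vertices a triangle has in $M$. A triangle with \emph{no} vertex in $M$ contains a vertex $f$ of a fixed (hypothetical) hitting set of size at most $k$; charging each such triangle to such an $f$, the opposite edges at $f$ avoid $M$ and so form a matching in $G[N^*(f)]$ of size less than $p$, giving fewer than $p$ triangles per $f$ and $O(pk)$ in total. A triangle with \emph{exactly one} vertex $m\in M$ is charged to $m$; its opposite edge (between the two vertices outside $M$) again lies in $G[N^*(m)]$, so fewer than $p$ triangles are charged to each $m\in M$, for $O(pk)$ in total. This leaves triangles with \emph{at least two} vertices in $M$, each of which contains an edge of $G[M]$; since $W$ is edge-disjoint these edges are pairwise distinct, so their number is at most $|E(G[M])|$.

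The main obstacle is exactly this last case, i.e.\ bounding $|E(G[M])|$, and here I would invoke the geometry: $G$ is a disk graph of ply at most $p$, hence so is $G[M]$, and a ply-$p$ disk graph on $N$ disks has only $O(pN)$ edges. To see the latter, charge each intersecting pair to its smaller disk $D$; every disk $D'$ of radius at least $r(D)$ meeting $D$ covers a constant fraction of the disk of radius $2r(D)$ about the center of $D$, so the ply bound together with an area argument shows that $D$ meets only $O(p)$ such disks. Applying this to $G[M]$ gives $|E(G[M])|=O(p|M|)=O(pk)$, so the number of triangles with two or more vertices in $M$ is $O(pk)$ as well. Summing the three cases yields $|W|=O(pk)$; note that this is a property of the abstract graph, so the argument needs no explicit disk representation, keeping the construction robust.
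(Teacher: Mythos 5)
Your proposal is correct, and it takes a genuinely different route from the paper's proof. The paper constructs the core explicitly around a vertex set $F_0$, defined as the union of a greedily computed triangle hitting set of size at most $3k$ and the endpoints of all marked edges: for each edge of $G[F_0]$ it adds one triangle whose third vertex lies outside $F_0$ (if one exists), and for each $v\in F_0$ it adds the triangles spanned by $v$ and a maximum matching of $G[N(v)\setminus F_0]$. In that construction the bound $|W|=O(pk)$ is nearly immediate --- it follows from the matching bound of the second branching step together with the fact that $G[F_0]$ has $O(p|F_0|)$ edges (Lemma~\ref{lem:technical-core}) --- and the work goes into verifying the core property by a case analysis on how many vertices of a given triangle lie in $F_0$. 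Your maximal edge-disjoint packing reverses this: the core property is free from maximality (two triangles sharing two vertices necessarily share the edge joining them), and all the work moves into the counting. There, your three-way charging --- to vertices of a hypothetical size-$k$ solution, to vertices of $M$, and to edges of $G[M]$ --- rests on exactly the same two ingredients as the paper, in different packaging: the termination condition of the second branching, exploited via the clean observation that edge-disjoint triangles through a common vertex have pairwise vertex-disjoint opposite edges and hence yield a matching in $G[N^*(\cdot)]$, and the ply-based bound $|E(G[M])|=O(p|M|)$, which is precisely Lemma~\ref{lem:technical-core} with the same smaller-disk charging proof. What your version buys: the construction itself never inspects marked edges, hitting sets, or the geometry, so polynomial running time and robustness are evident, and both the ply-$p$ representation and the optimal solution enter only existentially in the analysis; this reliance on the \textsf{YES}-instance assumption is legitimate and is shared by the paper, which needs it to bound $|F_0|$ and the number of marked edges. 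What the paper's version buys: the $O(pk)$ size bound is explicit from the construction itself, without fixing a hypothetical solution, at the price of a more delicate verification that $W$ is in fact a core.
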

\begin{proof}
    We construct a core $W$ of $G$ as follows. 
    Let $F_0$ be the the union of
    a triangle hitting set of size at most $3k$
    and the set of all endpoints of the marked edges,
    which can be computed in polynomial time.\footnote{
    We can find a hitting set of size at most $3k$ as follows: Find a triangle, and add all its vertices to a triangle hitting set. Then remove all its vertices from the graph.}
    Note that, the size of $F_0$ is at most $O(k)$. 
    Then let $W$ be the set of triangles constructed as follows:
    for each edge $xy$ of $G[F_0]$, we add
    an arbitrary triangle of $G$ formed by $x,y$ and $v\in V\setminus F_0$ to $W$, if it exists.
    The
    number of triangles in $W$ is $O(pk)$
    by Lemma~\ref{lem:technical-core}.

    At this moment, $W$ is not necessarily a core. See Fig.~\ref{fig:core}(c).
    Thus we add several triangles further to $W$ to compute a core of $G$.
    By the branching process, for every vertex $v$ of $F_0$,
    $N(v)\setminus F_0$ has a maximum matching of size
    at most $p$. We add the triangles formed by $v$ and the edges of the maximum matching to $W$ for every vertex $v$ of $F_0$. Note that we do not update $F_0$ during this phase, and thus triangles added to $W$ due to two different vertices might intersect.
    This algorithm clearly runs in polynomial time.
    Moreover, since the size of $F_0$ is $O(k)$,
    we add at most $O(pk)$ triangles to $W$, and thus
    the size of $W$ is $O(pk)$.

    We claim that $W$ is a core of $G$.
    Let $\{x,y,z\}$ be a triangle of $G$ not in $W$.
    Since $F_0$ contains a triangle hitting set of size at most $3k$, it must contain at least one of $x, y$ and $z$.
    If at least two of them, say $x$ and $y$, are contained in $F_0$,
    then there exists a triangle having edge $xy$ in $W$ by  construction. Thus, $\{x,y,z\}$ shares two vertices with
    such a triangle.
    The remaining case is that exactly one of them, say $x$,
    is contained in $F_0$.
    In this case,
    we have considered $x$ and
    a maximum matching of $N(x)\setminus F_0$.
    The only reason why $\{x,y,z\}$ is not added to $W$
    is that the edge $yz$ is adjacent to another edge $y'z'$ for
    some other triangle $\{x,y',z'\}$. Then $\{x,y',z'\}$ is
    added to $W$. Note that $\{x,y,z\}$ and
    $\{x,y',z'\}$ share at least two vertices, and thus  $\{x,y,z\}$ satisfies the condition for $W$ being a core. 
\end{proof}
\begin{lemma}\label{lem:technical-core}
    For a subset $F$ of $V$ of size $O(k)$,
    $G[F]$ has $O(pk)$ edges.
\end{lemma}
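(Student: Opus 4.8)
The plan is to prove the equivalent statement that a disk graph $H$ of ply at most $p$ on $m$ vertices has $O(pm)$ edges, and then apply it to $H=G[F]$: since $G[F]$ is an induced subgraph, its ply is also at most $p$, and taking $m=|F|=O(k)$ gives the desired $O(pk)$ bound. The starting move is to classify every edge $\{u,v\}$ of $H$, i.e.\ every intersecting pair of disks $D(u),D(v)$, into two types according to the two bounding circles: a \emph{containment edge}, where the circles are disjoint and one disk lies inside the other, and a \emph{crossing edge}, where $\partial D(u)$ and $\partial D(v)$ cross (generically in two points). After a symbolic perturbation to remove tangencies, these two types account for all edges, so it suffices to bound each type separately by $O(pm)$.

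Bounding the containment edges is the easy step. If $D(v)\subseteq D(u)$, then the center of $D(v)$ lies in $D(u)$, so in fact \emph{every} disk that contains $D(v)$ contains that center. Since the ply is at most $p$, at most $p$ disks pass through any single point of the plane; hence each disk is contained in at most $p-1$ others, and the number of containment edges is at most $(p-1)m=O(pm)$.

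The main obstacle is bounding the crossing edges, and here I would invoke the additively weighted Voronoi machinery recorded in Lemma~\ref{lem:complexity}. View each vertex as a site placed at the center of its disk with weight equal to its radius, so that a point $x$ lies in $D(v)$ exactly when the additively weighted distance from $x$ to the site of $v$ is nonpositive; consequently the number of sites within nonpositive weighted distance of $x$ equals the number of disks covering $x$, which is at most $p$. A crossing point of $\partial D(u)$ and $\partial D(v)$ is then a point lying at weighted distance exactly $0$ from both of these sites while having at most $p$ sites within nonpositive weighted distance. Such points are precisely the depth-at-most-$p$ vertices in the arrangement of the weighted distance functions, equivalently the vertices appearing across the order-$1,\dots,p$ additively weighted Voronoi diagrams. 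By Lemma~\ref{lem:complexity} the order-$p$ diagram already has complexity $O(mp)$, and a standard Clarkson--Shor summation (using that the union of $m$ disks has linear boundary complexity) bounds the total number of such low-depth vertices by $O(mp)$. Since each crossing edge contributes at most two such vertices and each vertex is associated with exactly one pair, the number of crossing edges is $O(mp)$. Adding the two bounds gives $O(pm)$ edges in total, which is the claim. The delicate point to get right is the identification of the crossing vertices with low-depth features of the weighted distance surfaces, so that Lemma~\ref{lem:complexity} can legitimately be brought to bear.
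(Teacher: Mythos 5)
Your decomposition into containment edges and crossing edges is a genuinely different route from the paper's proof, which is a short charging argument: charge each edge of $G[F]$ to the smaller of its two disks, and observe that a disk $D$ of radius $r$ is intersected by only $O(p)$ disks of radius at least $r$ (any such disk contains a disk of radius $r$ lying inside the disk of radius $3r$ concentric with $D$, and ply at most $p$ caps the number of such disks by an area/packing argument). This gives $O(p|F|)=O(pk)$ edges with no case analysis and no arrangement machinery. Your containment-edge bound is correct and equally elementary.

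The crossing-edge case, however, contains a genuine error at exactly the point you flag as delicate. A crossing point of $\partial D(u)$ and $\partial D(v)$ is a point at weighted distance $0$ from the \emph{two} sites $u$ and $v$; a vertex of an order-$k$ additively weighted Voronoi diagram is a point at equal weighted distance from (generically) \emph{three} sites. So crossing points are not vertices of any order-$k$ diagram: they lie in the relative interiors of bisector edges. Hence Lemma~\ref{lem:complexity} cannot be brought to bear as you propose, and the natural repair inside that framework --- charging each crossing point whose depth is $j$ to the edge of the order-$(j+1)$ diagram containing it, and summing the bound $O(m(j+1))$ over $j\le p$ --- only yields $O(mp^2)$ crossing edges, i.e.\ $O(p^2k)$, which is too weak for the lemma and for its use in Lemma~\ref{lem:core}. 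Fortunately, the other half of your own sentence is already a complete justification: by Clarkson--Shor sampling together with the $O(m)$ union complexity of disks, the arrangement of $m$ circles has $O(mp)$ vertices of depth at most $p$; since the ply is at most $p$, \emph{every} crossing point has depth at most $p$, and each crossing pair contributes two vertices, so the number of crossing edges is $O(mp)$. This is the same standard fact the paper itself invokes (the complexity of the arrangement of the disks of $F$ is $O(p|F|)$) in the proof of Lemma~\ref{lem:additive}. So your proof becomes correct once you delete the Voronoi identification and rest the crossing case on Clarkson--Shor alone --- though it remains considerably heavier than the paper's charging scheme.
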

\begin{proof}
    We use a charging scheme. For each edge $uv$ of $G[F]$,
    we charge it to the smaller disk among the disks
    represented by $u$ and $v$.
    Then each disk $D$ is charged by $O(p)$ edges.
    This is because $D$ is intersected by $O(p)$ disks
    larger than $D$.
    To see this, recall that the ply of $\mathcal D$ is at most $p$ due to the first branching process.
    Since the size of $F$ is $O(k)$,
    the number of edges of $G[F]$ is $O(pk)$.
\end{proof}


The following observation will be used in the correctness proof of the kernelization step
in Section~\ref{sec:kernel}. The observation holds because $G'$ does not have any triangle not appearing in $G$.
\begin{observation}\label{obs:core}
    Let $G'$ be an induced subgraph of $G$ such that $V(G')$
    contains all vertices of a core $W$ of $G$.
    Then $W$ is a core of $G'$.
 \end{observation}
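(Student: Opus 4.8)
The plan is to prove Observation~\ref{obs:core} directly from the definition of a core by checking that the defining condition transfers from $G$ to the induced subgraph $G'$. Recall that $W$ is a core of $G$ means: for every triangle of $G$, at least one triangle of $W$ shares at least two vertices with it. To show $W$ is a core of $G'$, I must verify two things: first, that $W$ is still a valid collection of triangles of $G'$, and second, that every triangle of $G'$ has a triangle of $W$ sharing at least two vertices with it.

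First I would observe that since $G'$ is an induced subgraph of $G$ whose vertex set $V(G')$ contains all vertices appearing in the triangles of $W$, every triangle of $W$ is indeed a triangle of $G'$: its three vertices lie in $V(G')$, and because $G'$ is \emph{induced}, all three edges among them are present in $G'$. Hence $W$ is a well-defined set of triangles of $G'$.

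Next, I would take an arbitrary triangle $\{x,y,z\}$ of $G'$ and argue that it has a triangle of $W$ sharing at least two vertices with it. The key point, already flagged in the excerpt, is that $G'$ introduces no new triangles: since $G'$ is an induced subgraph of $G$, any triangle of $G'$ is also a triangle of $G$ (its vertices and edges all survive in $G$). Applying the hypothesis that $W$ is a core of $G$ to the triangle $\{x,y,z\}$ viewed as a triangle of $G$, we obtain a triangle $T\in W$ sharing at least two vertices with $\{x,y,z\}$. Since $T$ is also a triangle of $G'$ by the first step, this is exactly the required condition in $G'$, and therefore $W$ is a core of $G'$.

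This is essentially a one-line consequence of the fact that taking an induced subgraph neither creates new triangles nor destroys any triangle all of whose vertices are retained, so there is no real obstacle; the only subtlety worth stating explicitly is the use of \emph{induced} in both directions (to keep the triangles of $W$ intact in $G'$, and to ensure no spurious triangle of $G'$ escapes the core condition inherited from $G$). I would keep the proof to these two short paragraphs.
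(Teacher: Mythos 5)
Your proof is correct and matches the paper's reasoning: the paper justifies the observation with the single remark that $G'$ has no triangle not appearing in $G$, which is exactly your second step, and your first step (triangles of $W$ survive in $G'$ because $V(G')\supseteq V(W)$ and $G'$ is induced) makes explicit the part the paper leaves implicit. Nothing is missing; this is the same argument, spelled out.
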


For each instance $(G,k)$ obtained from the branching process,
we apply the cleaning step that removes all vertices not hitting any triangle of $G$ from $G$.
Specifically, we remove a vertex of degree less than two. Also, we remove a vertex whose neighbors are independent. Whenever we remove a vertex,
we also remove its adjacent edges.

\subsection{Kernelization Using Crown Decomposition}\label{sec:kernel}
Let $(G,k)$ be a \textsf{YES}-instance we obtained from the
branching process.
We show that if the number of vertices of $G$
contained in the triangles of $G$ is at least $pk$,  we can
construct a \emph{crown} for the triangles of $G$.
Then using this, we can produce a \textsf{YES}-instance $(G',k)$ of \trihit{} in polynomial time where $G'$
is a proper induced subgraph of $G$.
Thus by repeatedly applying this process (at most $n^2$ times)
and then by removing all vertices not contained in any triangle of $G$,
we can obtain a \textsf{YES}-instance $(G',k)$ where $G'$ is a disk
graph of complexity $O(pk)$.

\begin{figure}[t]
    \centering
    \includegraphics[width=0.5\textwidth]{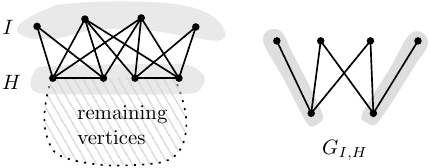}
    \caption{
    \small 
    The edges in the matching $M$ in $G_{I,H}$
    are colored gray.}
    \label{fig:crown}
\end{figure}

We first define a crown decomposition of a graph for triangles. For illustration, see Fig.~\ref{fig:crown}.
A crown decomposition of a graph
was initially introduced to construct
a linear kernel for \textsc{Vertex Cover}. Later, Abu-Khzam~\cite{ABUKHZAM2010524}
generalized this concept to hypergraphs.
Using this, he showed that a triangle hitting set admits
a quadratic kernel for a general graph.
In our case, we will show that the size of a kernel is indeed
$O(pk)$ due to the branching process. More specifically,
it is due to the existence of a core of size $O(pk)$.

\begin{definition}[{\cite{ABUKHZAM2010524}}]\label{def:crown}
A \emph{crown} for the triangles of $G$
is a triple $(I, H, M)$ with a subset $I$ of $V(G)$, a subset $H$ of $E(G)$, and a matching $M$ of $G_{I,H}$ s.t. 
\begin{itemize}
    \item no two vertices of $I$ are contained in the same triangle of $G$, 
    \item each edge of $H$ forms a triangle with some vertex of $I$, and
    \item every edge (vertex in $G_{I,H}$) of $H$ is matched under $M$, 
\end{itemize}
where $G_{I,H}$ is the bipartite graph with vertex set $I\cup H$ such that
$x\in I$ and $uv\in H$ are connected by an edge if and only if $u,v$ and $x$ form a triangle. 
\end{definition}
\medskip

If a crown $(I,H,M)$ for the triangles of $G$ exists,
we can remove all vertices in $I$, but instead, we mark all edges of $H$. Then the resulting graph also has a triangle hitting set of size at most $k$~\cite{ABUKHZAM2010524}. 
\begin{lemma}[{\cite[Lemma 2]{ABUKHZAM2010524}}]\label{lem:kernel}
    Let $(G=(V,E),k)$ be a \textsf{YES}-instance of \trihit,  and $(I,H,M)$
    is a crown for the triangles of $G$.
    Then the subgraph of $G$ obtained by removing all vertices of $I$ and by marking all edges of $H$
    has a trianlge hitting set of size at most $k$.
\end{lemma}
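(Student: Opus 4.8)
The plan is to prove Lemma~\ref{lem:kernel}, which asserts that given a crown $(I,H,M)$ for the triangles of $G$, removing all vertices of $I$ and marking all edges of $H$ preserves the existence of a triangle hitting set of size at most $k$. The core idea is an exchange argument: I want to show that there is always an optimal (or size-$\le k$) triangle hitting set that respects the structure forced by the crown, so that deleting $I$ and recording the obligation ``cover at least one endpoint of each edge of $H$'' via marking loses no information.

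\textbf{Forward direction.} First I would show that if $G$ has a triangle hitting set $F$ of size at most $k$, then the reduced graph has one too. Let $G'$ be the graph obtained by deleting $I$; I claim $F' := F \setminus I$ works for $G'$ together with the marking constraint on $H$. The key observations are: (i) by the first crown condition, no two vertices of $I$ lie in a common triangle, so the vertices of $I$ are ``independent with respect to triangles''; and (ii) each edge $uv \in H$ is matched under $M$ to a distinct vertex $x_{uv} \in I$ forming a triangle $\{x_{uv},u,v\}$. Since $F$ hits this triangle, $F$ contains $x_{uv}$, or $u$, or $v$. I would argue that, by a standard swap, we may assume $F$ avoids $I$ entirely: if $F$ contains some $x \in I$, then because $x$ lies in only triangles that share the two ``base'' vertices with edges of $H$, we can replace $x$ by a base vertex without increasing $|F|$ and without uncovering any triangle. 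After this swap, for each matched edge $uv \in H$, the set $F$ must contain $u$ or $v$, which is exactly the marking constraint. Thus $F' = F$ restricted to $V(G')$ hits all triangles of $G'$ and satisfies every mark, and $|F'| \le |F| \le k$.

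\textbf{Reverse direction.} Conversely, given a triangle hitting set $F'$ of size $\le k$ for $G'$ that respects the marks on $H$, I would reconstruct a triangle hitting set of $G$ of the same size, showing the reduction is safe (sound). Every triangle of $G$ either avoids $I$ — and is then already a triangle of $G'$, hence hit by $F'$ — or uses exactly one vertex of $I$ (by the independence-in-triangles condition), say $x \in I$ with base edge $uv$. If $uv \in H$, then the mark on $uv$ forces $u \in F'$ or $v \in F'$, so the triangle is hit. The one subtlety is triangles through $x$ whose base edge is \emph{not} in $H$; I need the definition of crown (together with the cleaning/construction that produced $I$ and $H$) to guarantee that every triangle through a vertex of $I$ has its base edge in $H$, so no such stray triangle exists.

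\textbf{Main obstacle.} The delicate point, and the step I expect to require the most care, is the exchange argument justifying that an optimal solution can be taken disjoint from $I$, combined with verifying that \emph{all} triangles incident to $I$ are accounted for by edges of $H$. This is where the precise definition of $G_{I,H}$ and the matching condition matter: I must ensure that marking the edges of $H$ captures every covering obligation that deleting $I$ removes, with no triangle slipping through uncovered. Since this is stated as a direct citation of \cite[Lemma 2]{ABUKHZAM2010524}, the cleanest route is to invoke that result after checking that our notion of crown and of marking coincides with Abu-Khzam's, so the bulk of the work is confirming the definitions align rather than re-deriving the exchange argument from scratch.
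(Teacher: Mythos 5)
Your overall strategy---an exchange argument turning a size-$k$ hitting set of $G$ into one that avoids $I$ and covers every edge of $H$---is the same as the paper's, and you correctly flag the subtle point that every triangle through a vertex of $I$ must have its base edge in $H$ (this holds because $H$ is constructed, before Lemma~\ref{lem:crown}, as the set of \emph{all} edges forming triangles with $I$; the paper's proof uses this silently). However, the central swap step, as you state it, would fail. You justify replacing a vertex $x\in F\cap I$ by ``a base vertex'' with the claim that $x$ ``lies in only triangles that share the two base vertices with edges of $H$,'' so that the replacement uncovers nothing. But a vertex of $I$ can form triangles with \emph{many distinct} edges of $H$---the bipartite graph $G_{I,H}$ can have high degree on the $I$ side, which is precisely why the definition carries a matching $M$ at all. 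Concretely, take $I=\{x_1,x_2\}$ and $H=\{u_1v_1,\,u_2v_2\}$, with $M$ matching $x_1$ to $u_1v_1$ and $x_2$ to $u_2v_2$, and suppose $x_1$ also forms a triangle with $u_2v_2$. Then $F=\{x_1,u_2\}$ is a triangle hitting set of size $2$, but your swap may replace $x_1$ by the base vertex $v_2$ (base of the triangle $\{x_1,u_2,v_2\}$), yielding $\{v_2,u_2\}$: this uncovers the triangle $\{x_1,u_1,v_1\}$ of $G$ and leaves the marked edge $u_1v_1$ with no endpoint in the set, so your subsequent claim ``after this swap, for each matched edge $uv\in H$ the set contains $u$ or $v$'' is false. (Your opening claim that $F':=F\setminus I$ already works fails for the same reason; the swap exists exactly because it does not.)

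The repair is to drive the exchange by the edges of $H$ and the matching, which is what the paper does: as long as some $uv\in H$ has neither endpoint in the current set, its matched partner $x_{uv}$ must still be in the set (it is the only remaining vertex of the triangle $\{x_{uv},u,v\}$, and non-$I$ vertices are never removed), so trade $x_{uv}$ for $u$. Since $M$ is a matching, distinct uncovered edges trade distinct vertices of $I$, so the size never grows; at termination every edge of $H$ is covered, triangles avoiding $I$ are still hit, and triangles through $I$ are hit because their base edges lie in $H$. So the matching must dictate \emph{which} vertex of $I$ is traded for \emph{which} endpoint; that is the one idea your write-up is missing, and it is also what a verification against Abu-Khzam's Lemma~2 (your proposed shortcut) would have to supply. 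A final, minor point: the ``reverse direction'' you prove is not part of this lemma's statement (it is needed for the overall safeness of the kernelization, and your argument for it is fine once the base-edge-in-$H$ property is in hand).
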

\begin{proof}
    To make the paper self-contained, we give a proof of this lemma, but one
    can find this proof also in~\cite{ABUKHZAM2010524}.
    Let $F$ be a triangle hitting set of $G$ of size $k$.
    We construct another hitting set $F'$ from $F$
    not containing any vertex of $I$ as follows.
    We initialize $F'$ as $F$. If
    there is an edge $uv$ of $H$
    $u\notin F$ and $v\notin F$, then $x$ must be contained in $F'$, where the edge $\{x, uv\}$ of $G_{I,H}$ is in $M$.
    Then we replace $x$ with $u$. We repeat this until at least one vertex of all edges of $H$ is contained in $F'$.
    Then $|F'|\leq |F| = k$.

    We claim that $F'$ is also a triangle hitting set.
    A triangle not intersecting $I$ is hit by $F'$ by construction.
    Thus consider a triangle $\{x,u,v\}$ intersecting $I$.
    Since no two vertices of $I$ are contained in the same triangle,
    exactly one of $x,u$ and $v$, say $x$, is contained in $I$.
    Since every edge of $H$ is matched under $M$, $uv\in H$.
    Since $F'$ contains either $u$ or $v$, $\{x,u,v\}$ is hit by $F'$.
    Therefore, $F'$ is a triangle hitting set of size at most $k$.
\end{proof}

Now we show that $(G,k)$ has a crown for its triangles if  $c pk$ vertices are contained in the triangles of $G$,
where $c$ is a sufficiently large constant.
Let $W$ be a core of $G$ of size $O(pk)$,
which exists due to Lemma~\ref{lem:core} and Observation~\ref{obs:core}.
Let $I$ be the set of vertices of $G$ not contained in any triangle of $W$ but contained in some triangle of $G$.
If $cpk$ vertices are contained in the triangles of $G$, the size of $I$ is $c'pk$ for a sufficiently large constant $c'$  since $|W|=O(pk)$.
Then let $H$ be the set of all edges of $G$
which form triangles together with the vertices of $I$.
Note that for every edge of $H$, its endpoints are contained
in the same triangle of $W$
by the definition of the core.
Thus the size of $H$ is at most $3\cdot |W|=O(pk)$.
Therefore,
if more than  $cpk$ vertices are contained in the triangles of $G$,  $|I| > |H|$.

\begin{lemma}\label{lem:crown}
     If $|I|>|H|$, there are two subsets $H'\subseteq H$ and $I'\subseteq I$ such that
     $(I',H',M')$ is a crown for the triangles of $G$.
\end{lemma}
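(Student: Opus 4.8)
The plan is to build the crown via the classical Hopcroft--Karp/König-type argument applied to the bipartite graph $G_{I,H}$, exactly as in the crown-decomposition lemma for \textsc{Vertex Cover}. First I would compute a maximum matching $M_0$ of $G_{I,H}$. Since $|I| > |H|$, and every vertex of $H$ can be matched at most once, the matching $M_0$ cannot saturate all of $I$; hence there is at least one unmatched vertex in $I$. This unmatched vertex is the seed that lets us extract a crown in which every edge of $H'$ is matched (the third condition of Definition~\ref{def:crown}) while $I'$ is a genuine ``surplus'' set.

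The key step is the alternating-reachability analysis. I would let $I_0 \subseteq I$ be the set of vertices of $I$ reachable from some $M_0$-unmatched vertex of $I$ by $M_0$-alternating paths (paths alternating between non-matching and matching edges), and let $H' \subseteq H$ be the set of vertices of $H$ reachable in the same way. Standard arguments show: (i) every vertex of $H'$ is matched under $M_0$ (an unmatched $H$-vertex would give an augmenting path, contradicting maximality of $M_0$); (ii) the $M_0$-partner of each vertex of $H'$ lies in $I_0$; and (iii) crucially, $N_{G_{I,H}}(I_0) \subseteq H'$, i.e. every edge of $H$ incident to a vertex of $I_0$ already lies in $H'$ — otherwise we could extend the alternating path. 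I would then set $I' = I_0$, let $H'$ be this reachable subset of $H$, and take $M'$ to be the restriction of $M_0$ to the edges incident to $H'$. Property (iii) guarantees that every triangle of $G$ formed by a vertex $x \in I'$ uses an edge of $H'$, so the first two bullets of Definition~\ref{def:crown} are inherited from the definitions of $I$ and $H$, and (i)--(ii) give the third bullet.

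Two conditions of Definition~\ref{def:crown} need to be checked against how $I$ and $H$ were constructed rather than proved from scratch. The condition that no two vertices of $I'$ lie in a common triangle holds because $I' \subseteq I$ and $I$ was defined as vertices not in any triangle of the core $W$; since every triangle of $G$ shares two vertices with a triangle of $W$, a triangle containing two vertices of $I$ would force one of those two ``shared'' vertices to lie outside $W$, a contradiction. The condition that each edge of $H'$ forms a triangle with some vertex of $I'$ follows because $H' \subseteq H$ consists of edges completing triangles with $I$-vertices, and reachability in $G_{I,H}$ means each such edge is joined to a vertex of $I_0 = I'$.

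The main obstacle I anticipate is verifying property (iii), the ``no escaping edge'' closure property, cleanly. It is the heart of why the crown is self-contained: one must argue that no edge of $H \setminus H'$ is adjacent in $G_{I,H}$ to any vertex of $I'$, for otherwise the matched $I$-endpoint on the far side of that edge would itself be alternating-reachable, contradicting $H \setminus H'$ being unreachable. I would phrase this as: if $uv \in H$ forms a triangle with $x \in I'$, then the edge $\{x, uv\}$ of $G_{I,H}$ together with the alternating path reaching $x$ either reaches $uv$ (placing it in $H'$) or produces an augmenting path (contradicting maximality). Once (iii) is in hand, the remaining verifications are bookkeeping, and the existence of $I' \neq \emptyset$ is guaranteed by the strict inequality $|I| > |H|$.
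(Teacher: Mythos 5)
Your proposal is correct and takes essentially the same approach as the paper: the paper applies K\H{o}nig duality to $G_{I,H}$, taking a minimum vertex cover $X$ and a matching of size $|X|$, and sets $I'=I\setminus X$, $H'=H\cap X$ --- exactly the sets your alternating-path reachability construction from a maximum matching produces. Your closure property (iii) corresponds to the paper's observation that any edge $\{x,uv\}$ of $G_{I,H}$ with $x\in I'$ must be covered by $X$ at $uv$, forcing $uv\in H'$, and the non-emptiness of $I'$ follows from $|I|>|H|$ in both versions.
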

\begin{proof}
    Let $G_{I,H}$ be the bipartite graph defined in Definition~\ref{def:crown}.
    Notice that $I$ is an independent set in $G_{I,H}$
    since $G_{I,H}$ is bipartite.
    Let $X$ be a minimum vertex cover of $G_{I,H}$.
    Also, let $M$ be a matching of size $|X|$ of $G_{I,H}$
    such that every edge in a matching is incident to exactly one vertex of $X$.
    Then let $I'=I\setminus X$, and $H'=H\cap X$.
    Also, let $M'$ be the set of edges of $M$
    having their endpoints on $I'\cup H'$.
    Since $|I|>|H|$, $I$ is not fully contained in $X$,
    and thus $I'\neq\emptyset$.

    Then we show that $(I',H', M')$ is a crown for the triangles of $G$.
    First, $I'$ does not violate the condition in Definition~\ref{def:crown} since $I'\subset I$.
    Then consider the condition for $H'$.
    Note that $H'$ is contained in $X$, and $I'$ does not intersect $X$.
    Thus for all triangles $\{x,u,v\}$ with $x\in I'$,
    $uv$ must be in $H$, and then $uv$ must be in $H'$.
    Therefore, $H'$ is the set of all edges of $G'$
    that form triangles together with the vertices of $I'$,
    and thus $H'$ satisfies the secons condition in Definition~\ref{def:crown}.
    Also, by construction, $G_{I',H'}$ is a subgraph of
    $G_{I,H}$ such that all edges of $M'$ are in $G_{I',H'}$,
    and thus $H'$ are matched under $M'$.
\end{proof}

By Lemma~\ref{lem:kernel} and Lemma~\ref{lem:crown},
we can obtain an instance $(G',k')$ equivalent to $(G,k)$
such that the union of the triangles has complexity $O(pk)$
for each instance $(G,k)$ obtained from Section~\ref{sec:tri-branching} in polynomial time.



\begin{theorem}\label{thm:trihit}
Given a disk graph $G$ with its geometric representation, we can find a triangle hitting set of $G$ of size $k$
in $2^{O(k^{2/3}\log k)}n^{O(1)}$ time, if it exists.
Without a geometric representation, we can do this in
$2^{ O(k^{4/5}\log k)}n^{O(1)}$ time.
\end{theorem}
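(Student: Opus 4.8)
The plan is to assemble the three ingredients developed above---the two-step branching of Section~\ref{sec:tri-branching}, the crown-based kernelization of Section~\ref{sec:kernel}, and a treewidth dynamic program---and then to optimize over the ply parameter $p$. Concretely, I would first run the two-step branching process, which by Lemma~\ref{lem:branching} produces $2^{O((k/p)\log k)}$ instances in $2^{O((k/p)\log k)}n^{O(1)}$ total time, one of which is a \textsf{YES}-instance whose geometric representation has ply at most $p$. For each such instance I would apply the kernelization of Lemmas~\ref{lem:core}, \ref{lem:crown}, and~\ref{lem:kernel} repeatedly: whenever more than $cpk$ vertices lie on triangles, Lemma~\ref{lem:crown} yields a crown, and Lemma~\ref{lem:kernel} lets us delete the crown's independent part (marking the corresponding edges) without changing the answer. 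This runs in polynomial time and leaves an equivalent instance $(G',k')$ whose triangles span only $O(pk)$ vertices; after discarding the vertices on no triangle we obtain a disk graph $G'$ on $O(pk)$ vertices of ply at most $p$, carrying at most $k$ marked edges.

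It then remains to solve \trihit{} on $G'$ by dynamic programming over a tree decomposition, where the two cases of the statement differ only in which treewidth bound and which tree-decomposition construction are available. If a geometric representation is given, I would invoke Lemma~\ref{lem:tw-without-ply} to compute, in polynomial time, a tree decomposition of $G'$ of weighted width $O(\sqrt{|V(G')|}) = O(\sqrt{pk})$, and then run the $2^{O(w)}n$-time weighted-width dynamic program for \trihit{} (respecting the marked edges). This gives running time $2^{O((k/p)\log k + \sqrt{pk})}n^{O(1)}$ per branch; balancing $k/p$ against $\sqrt{pk}$ yields the choice $p = k^{1/3}$, for which both exponents become $O(k^{2/3})$, so the total time is $2^{O(k^{2/3}\log k)}n^{O(1)}$.

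Without a geometric representation I cannot use Lemma~\ref{lem:tw-without-ply}, since the balanced separator of de~Berg et al.\ (Lemma~\ref{lem:deberg-separator}) is computed from the disks. Instead I would rely only on the combinatorial bound that a disk graph of ply $p$ on $N$ vertices has ordinary treewidth $O(p\sqrt{N})$, which holds without the representation; for $G'$ this is $O(p\sqrt{pk}) = O(p^{3/2}\sqrt{k})$. A robust constant-factor treewidth approximation (the standard $2^{O(\mathrm{tw})}n$-time algorithm that works on an arbitrary graph) then produces a tree decomposition of width $O(p^{3/2}\sqrt{k})$ in $2^{O(p^{3/2}\sqrt{k})}n^{O(1)}$ time, and the ordinary-width dynamic program solves \trihit{} in $2^{O(p^{3/2}\sqrt{k})}n^{O(1)}$ time. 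The per-branch cost is $2^{O((k/p)\log k + p^{3/2}\sqrt{k})}n^{O(1)}$; here balancing $k/p$ against $p^{3/2}\sqrt{k}$ gives $p = k^{1/5}$, both exponents become $O(k^{4/5})$, and the total is $2^{O(k^{4/5}\log k)}n^{O(1)}$.

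The main obstacle is the robust case: losing the geometric representation forces me to give up the weighted separator, so the bag weights are controlled only by the ordinary treewidth $O(p^{3/2}\sqrt{k})$ rather than the weighted treewidth $O(\sqrt{pk})$, which is exactly why the exponent degrades from $k^{2/3}$ to $k^{4/5}$. The remaining care is routine: verifying that the marked edges can be threaded through the standard treewidth dynamic programs as ``at least one endpoint is selected'' constraints, and checking that the kernel indeed has $O(pk)$ vertices and ply at most $p$ so that the two treewidth bounds apply.
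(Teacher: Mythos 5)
Your proposal is correct and follows essentially the same route as the paper's proof: the two-step branching to ply $p$, the crown-based kernel leaving $O(pk)$ vertices on triangles, then the weighted-treewidth $O(\sqrt{pk})$ decomposition (Lemma~\ref{lem:tw-without-ply}) with the $2^{O(w)}n$ dynamic program in the representation-given case, versus the ordinary treewidth bound $O(p^{3/2}\sqrt{k})$ in the robust case, with the identical choices $p=k^{1/3}$ and $p=k^{1/5}$. Your added remarks on using a robust treewidth approximation algorithm and on threading the marked edges through the dynamic program only make explicit details the paper leaves implicit.
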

\begin{proof}
    After the branching and kernelization processes,
    we have $2^{O((k/p)\log k)}$ instances
    one of which is a \textsf{YES}-instance such that
    the size of the union of the triangles of $G'$ is at most $O(pk)$.
    For each instance $(G',k')$,
    we remove all vertices not contained in any triangle of $G'$. Then the resulting graph $G'$ has $O(pk)$ vertices.
    Then
    we compute a tree decomposition
    $(T,\beta)$ of $G'$ of weighted treewidth  $O(\sqrt{pk})$ using Lemma~\ref{lem:tw-without-ply}.

    By applying dynamic programming on $(T,\beta)$ as described in Lemma~\ref{lem:dp-trihit}, we can find a triangle hitting set of size $k'$ in $2^{O(\sqrt{pk})}$ time if it exists.
    Therefore, the total running time is $2^{O(\sqrt{pk})} \cdot 2^{ O((k/p)\log k)}$. By letting $p=k^{1/3}$, we have $2^{ O(k^{2/3}\log k)}n^{O(1)}$-time.
    
    If we are not given a geometric representation,
    we cannot use a tree decomposition of bounded weighted width.
    Instead, we can compute a tree decomposition
    $(T,\beta)$ of $G'$ of treewidth  $O(p\sqrt{pk})$.
    Then the total running time is $2^{O(p^{3/2}\sqrt{k})} \cdot 2^{ O((k/p)\log k)}$. By letting $p=k^{1/5}$, we have 
    $2^{O(k^{4/5}\log k)}n^{O(1)}$-time.
\end{proof}

\section{Feedback Vertex Set and Odd Cycle Transversal}\label{sec:fvs}
In this section, we show that
the algorithms in~\cite{doi:10.1137/1.9781611977073.80}  for \fvs{} and \oct{} indeed take 
$2^{\tilde O(k^{9/10})}n^{O(1)}$ time and $2^{\tilde O(k^{19/20})}n^{O(1)}$ time algorithms respectively.
It is shown in~\cite{doi:10.1137/1.9781611977073.80} that they take $2^{O(k^{13/14})}n^{O(1)}$ time and $2^{O(k^{27/28})}n^{O(1)}$ time, respectively, but we give a better analysis. 
We can obtain non-robust algorithms for these problems with better running times, but we omit the description of them.

We obtained better time bounds by 
classifying the vertices in a kernel used in~\cite{doi:10.1137/1.9781611977073.80} into two types, and 
by using the higher order additively weighted Voronoi diagrams. 
To make the paper self-contained, we present the algorithms in~\cite{doi:10.1137/1.9781611977073.80} here. 
The following lemma is a main observation of~\cite{doi:10.1137/1.9781611977073.80}. 
Indeed, the statement of the lemma given by~\cite{doi:10.1137/1.9781611977073.80} is stronger than this, but
the following statement is sufficient for our purpose. 
We say a vertex $v$ is \emph{deep} for a subset $F$ of $V$ if 
all neighbors of $v$ in $G$ are contained in $F$. 
For a subset $Q$ of $V(G)$, let $G/Q$ be the graph obtained 
from $G$ by contracting each connected component of $G[Q]$ into a single vertex.
\begin{restatable}[Theorem 1.2 of \cite{doi:10.1137/1.9781611977073.80}]{lemma}{previous}\label{lem:theorem2}
    Let $G$ be a disk graph that has a realization of ply $p$.
    For a subset $F$ of $V$, let $F^*$ be the union of $F$ and the set of all deep vertices for $F$.
    If $F$ contains a core of $G$, then the arrangement graph of $G/Q$ has treewidth 
    $O(\max\{\sqrt{|F^*|\cdot w} \cdot  p^{1.5}, w \})$ for a set $Q\subseteq V\setminus F^*$, where 
    $w$ is the treewidth of $(G-F^*)/Q$. 
\end{restatable}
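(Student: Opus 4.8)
The plan is to bound the treewidth of the \emph{planar} graph $\mathcal{A}(G/Q)$ directly through balanced separators rather than by massaging an abstract tree decomposition. Recall that a planar graph on $N$ vertices has treewidth $O(\sqrt{N})$, and that the standard recursion on balanced separators yields a tree decomposition whose width is within a constant factor of the largest separator used, provided the separator sizes are $\sqrt{\cdot}$-type. Hence it suffices to produce, for $\mathcal{A}(G/Q)$ and each of its induced subgraphs, a balanced separator of size $O(\max\{\sqrt{|F^*|\cdot w}\cdot p^{1.5}, w\})$; recursing on the two sides then gives the claimed treewidth bound.

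First I would extract the structural consequence of the hypothesis that $F$ contains a core. For any triangle $\{u,v,x\}$ of $G$ with $x\notin F$, both $u$ and $v$ must lie in $F$: otherwise this triangle would share fewer than two vertices with every triangle of the core, all of whose vertices lie in $F$, contradicting the definition of a core. In particular $G-F$ is triangle-free, so no three pairwise overlapping disks survive outside $F$. Combined with the ply bound and the charging scheme of Lemma~\ref{lem:technical-core}, in which each disk is crossed only by the $O(p)$ larger disks meeting it, this limits the number of arrangement faces attributable to a single non-$F^*$ disk, and therefore lets me measure the complexity of $\mathcal{A}(G/Q)$ away from $F^*$ in terms of $w$ and $p$ instead of the total number of vertices.

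I would then assemble the separator from two pieces. The first piece handles the part of $\mathcal{A}(G/Q)$ coming from $V\setminus F^*$: starting from a width-$w$ tree decomposition of $(G-F^*)/Q$, take the bag realizing a balanced split and replace its $O(w)$ vertices by the arrangement faces they occupy; since a contracted component fills one connected region and the local depth outside $F$ is bounded, this costs $O(w)$ up to the ply factor, which accounts for the $w$ term of the maximum. The second piece cuts across the faces generated by the $|F^*|$ disks of $F^*$; these form a sub-arrangement of complexity $O(p\,|F^*|)$, and I would run a weighted planar separator on it, weighting each face by the number of decomposition layers of $(G-F^*)/Q$ and overlapping disks that pass through it, so that its weighted size is $O(p^{3}|F^*|w)$ and a balanced cut has size $O(\sqrt{|F^*|\cdot w}\cdot p^{1.5})$. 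The union of the two pieces is then a balanced separator of $\mathcal{A}(G/Q)$ of the advertised size.

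The main obstacle is not the separator bookkeeping but making the interface between the abstract contraction $G/Q$ and the geometry of the arrangement rigorous. One must fix a definition of $\mathcal{A}(G/Q)$ under which contracting a connected component $G[Q_i]$ corresponds to merging exactly the faces inside the connected union of its disks, verify that this preserves planarity, and confirm that the treewidth-$w$ structure of $(G-F^*)/Q$ genuinely constrains how the $F^*$-faces interleave with the rest, so that the weighting used for the second separator faithfully reflects the decomposition layers. Pinning down that correspondence, and checking that the core-induced triangle-freeness keeps the per-disk face count low enough that the two pieces multiply out to exactly $p^{1.5}\sqrt{|F^*|\cdot w}$ rather than a larger power of $p$, is where the real work lies.
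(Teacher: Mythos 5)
A preliminary remark on the comparison you asked for: this paper never proves Lemma~\ref{lem:theorem2}. It is imported verbatim (in deliberately weakened form) as Theorem~1.2 of Lokshtanov et al.~\cite{doi:10.1137/1.9781611977073.80}, and the surrounding text says so explicitly. The paper's own contribution in this section is the bound $|F^*|=O(p^2|F|)$ proved via regular/irregular vertices and weighted higher-order Voronoi diagrams (Section~\ref{sec:theorem1}), not the treewidth statement itself. So your attempt cannot be matched against an in-paper argument; it has to stand on its own as a reproof of the cited theorem.

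Judged that way, it has a genuine gap at its quantitative core. Your second separator piece argues: the sub-arrangement of the $F^*$-disks has complexity $O(p|F^*|)$, each face is assigned a weight so that the ``weighted size'' is $O(p^3|F^*|w)$, hence a balanced cut of size $O(\sqrt{|F^*|\cdot w}\cdot p^{1.5})$ exists. This misuses the planar separator theorem. In the weighted Lipton--Tarjan theorem, vertex weights control only the \emph{balance} of the split; the \emph{size} of the separator is $O(\sqrt{N})$, where $N$ is the number of vertices of the planar graph. From an arrangement with $O(p|F^*|)$ faces you therefore get a separator of size $O(\sqrt{p|F^*|})$ --- no choice of weights lets the factor $w$ migrate into the separator's cardinality. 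To legitimately obtain $\sqrt{|F^*|\cdot w}\cdot p^{1.5}$ you would need an actual planar graph on $O(p^3|F^*|\cdot w)$ vertices whose separators pull back to vertex sets of $\mathcal A(G/Q)$ of comparable size, and your proposal never constructs such a graph. The first piece has a parallel unproven step: a vertex of a bag of the width-$w$ decomposition of $(G-F^*)/Q$ is a disk, or a contracted set of disks, whose union can be cut into arbitrarily many faces of the arrangement by the $F^*$-disks crossing it (triangle-freeness of $G-F$ bounds the ply outside $F$, not the number of $F^*$-disks crossing a single non-$F^*$ disk), so the claim that replacing bag vertices by ``the faces they occupy'' costs $O(w)$ up to a ply factor is precisely what would need proof. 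Your closing paragraph concedes that the interface between $G/Q$ and the arrangement, and the exact power of $p$, remain open --- but those are not finishing touches; they are the substance of the theorem, which is presumably why the authors cite it rather than reprove it.
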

In Section~\ref{sec:theorem1}, we show that the size of $F^*$ is $O(p^2|F|)$. 
This is our main contribution in this section. 
Then we solve the two cycle hitting problems using dynamic programming on a tree decomposition of bounded treewidth.  

\subsubsection{Branching and Cleaning Process.} We first apply the two-step branching process as we did for \trihit.
Note that if a vertex set $F$ is a feedback vertex set or odd cycle transversal, then $F$ is a triangle hitting set.

We apply the cleaning process for
each instance $(G,k)$ we obtained from
the earlier branching process.
First, we remove all vertices of degree one.
Then we keep $O(1)$ vertices from each class of \emph{false twins}  as follows.
A set of vertices is called a \emph{false twin} if
they are pairwise non-adjacent, and they have the same neighborhood.
As observed in~\cite{doi:10.1137/1.9781611977073.80}, for each class of false twins, every minimal feedback vertex set either contains the entire class except for at most one vertex, or none of the vertices in that class.  
Thus 
we keep only one vertex (an arbitrary one) from each class of \emph{false twins} if $(G,k)$ is an instance of \fvs{}. 
Similarly, for each class of false twins, every
odd cycle transversal
contains the entire class except for at most two vertices, or none of the vertices in that class.
Therefore, 
when we deal with \oct, we keep only two vertices from each class of \emph{false twins}. 
We remember how many vertices each kept vertex represents and make use of this information in DP.

We have $2^{O((k/p)\log k)}$ instances one of which is a  \textsf{YES}-instance $(G,k)$ where $G$ has a core of size $O(pk)$ and a geometric representation of ply $p$. Moreover, at most two vertices are in each class of false twins.

\subsection{The Number of Deep Vertices}
\label{sec:theorem1}
Let $(G,k)$ be an \textsf{YES}-instance obtained from the branching and cleaning process.
Let $F$ be a subset of $V(G)$ containing a core of $G$.
The disks of $G-F$ have ply at most two. 
In this section, we give an upper bound on the number of \emph{deep vertices} for $F$.
For this, we classify the vertices of $V-F$ in two types: regular and irregular vertices. 
See Fig.~\ref{fig:irregular}(a). 

\begin{definition}\label{obs:irregualr}
 A vertex $v$ of $V-F$ is said to be \emph{irregular} if $v$ belongs to one of the following types:
    \begin{itemize}\label{def:irregular}
        \item $D(v)$ contains a vertex of $\mathcal A$, 
        \item $D(v)$ is contained in a face of $\mathcal A$, 
        \item $D(v)$ contains a disk represented by a vertex of $F$, or
        \item $D(v)$ intersects three edges of $\mathcal A$ incident to the same face of $\mathcal A$,  
    \end{itemize}
where $\mathcal A$ denotes the arrangement of the disks represented by $F$. If $v$ does not belong to any of the types, we say $v$ is  regular. 
\end{definition}

\begin{figure}[t]
    \centering
    \includegraphics[width=0.55\textwidth]{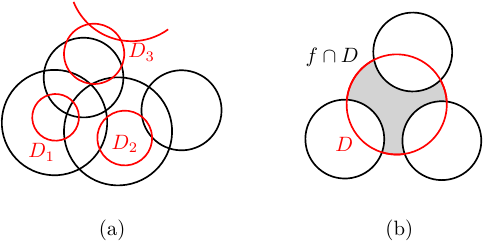}
    \caption{\small (a) The vertices of $F$ are colored black, and the vertices of $V-F$ are colored red. $D_1$ is deep and irregular, $D_2$ is deep and regular, and $D_3$ is shallow and irregular. (b) $D$ is irregular due to the fourth case of Definition~\ref{def:irregular}. 
    }
    \label{fig:irregular} 
\end{figure}

\begin{lemma}\label{lem:regular}
    The number of deep and regular vertices is $O(p^2|F|)$. 
\end{lemma}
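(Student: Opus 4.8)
The plan is to understand how a deep regular disk can meet the arrangement $\mathcal{A}$ of the disks of $F$, and then charge such disks to cells of a higher-order additively weighted Voronoi diagram of $F$. First I would bound the complexity of $\mathcal{A}$. By the charging argument of Lemma~\ref{lem:technical-core} (charge each intersecting pair of $F$-disks to the smaller one, which meets $O(p)$ larger disks by the ply bound), the intersection graph of $F$ has $O(p|F|)$ edges; each crossing pair contributes at most two arrangement vertices, so $\mathcal{A}$ has $O(p|F|)$ vertices, and by Euler's formula $O(p|F|)$ edges and faces. I would also record that deep disks are pairwise disjoint: if $D(v)$ and $D(v')$ met with $v,v'$ both deep, then $v'$ would be a neighbor of $v$ outside $F$, contradicting deepness.

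Next I would analyze a single deep regular disk $D(v)$. Since $v$ is not irregular of the first type, $D(v)$ contains no vertex of $\mathcal{A}$; hence each circle $C_f$ meets $D(v)$ in at most one arc, and no two such arcs cross inside $D(v)$, so the arcs are pairwise non-crossing chords subdividing $D(v)$ into regions whose adjacency graph is a tree. The fourth irregularity type forbids $D(v)$ from meeting three edges of $\mathcal{A}$ incident to a common face, which forces every region to be bounded by at most two chords; hence the tree is a path $R_0 - R_1 - \cdots - R_t$. Now each crossing disk $D_f$ is cut off by a single chord, so it contains either the prefix (hence the region $R_0$) or the suffix (hence $R_t$); since at most $p$ disks of $F$ contain a fixed point of $R_0$ and at most $p$ contain a fixed point of $R_t$ by the ply bound, I get $t \le 2p$. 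Because $D(v)$ contains no arrangement vertex, each of these $\le 2p$ arcs lies inside a single edge of $\mathcal{A}$, so $D(v)$ crosses at most $2p$ edges and, using the second and third types, overlaps between two and $2p+1$ faces.

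Finally I would charge deep regular vertices to an additively weighted Voronoi diagram of the $|F|$ disks, where $D(f)$ is a site at its center with weight $r_f$ and distance $d(c_f,x)-r_f$ (the signed distance from $x$ to $C_f$). For a deep regular center $x_v$, the $k_v\le p$ disks containing $x_v$ have negative distance and are therefore exactly the $k_v$ nearest sites, so $x_v$ lies in the order-$k_v$ cell whose defining set is this containment set. By Lemma~\ref{lem:complexity} the order-$k$ diagram has complexity $O(|F|k)$, so over $k=1,\ldots,p$ the diagrams have total complexity $O(p^2|F|)$, bounding the number of relevant cells. It then remains to show that only $O(1)$ deep regular centers fall in one cell, using that deep disks are pairwise disjoint and that within a fixed cell the combinatorial pattern in which each $D(v)$ crosses its $\le 2p$ bounding circles is rigid, so a packing argument limits the count. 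This last step is the main obstacle: a naive charge to a single edge or face of $\mathcal{A}$ fails, since arbitrarily small disjoint disks can all cross one edge, and the role of the higher-order Voronoi diagram is precisely to subdivide space finely enough that each charged feature is local while keeping the total feature count at $O(p^2|F|)$.
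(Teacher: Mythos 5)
There is a genuine gap, and it is exactly the step you flag as ``the main obstacle.'' Your charging scheme sends a deep regular vertex $v$ to the order-$k_v$ Voronoi cell determined by the set of $F$-disks \emph{containing the center} $x_v$. But nothing forces distinct deep regular vertices to have distinct containment sets: two disjoint deep disks of different radii can have centers in the same cell while poking out to touch different $F$-disks, and a long thin cell can host many such disks. So ``only $O(1)$ deep regular centers per cell'' is not something a packing argument will give you (and note also that $k_v=0$ is possible, in which case your charging does not apply at all). The fix is to change what you charge to: the additive weights are designed precisely so that \emph{intersection} (not containment) is captured. With sites the centers of the $F$-disks weighted by their radii, a disk $D'\in F$ intersects $D(v)$ if and only if the weighted distance from $x_v$ to the site of $D'$ is at most the radius of $D(v)$. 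Hence if $v$ has exactly $r$ neighbors (all in $F$, since $v$ is deep), the $r$ nearest sites to $x_v$ are \emph{exactly} $N(v)$, so $x_v$ lies in an order-$r$ cell whose defining set is $N(v)$. Injectivity is then free: in the setting of this lemma no two vertices outside $F$ have the same neighborhood in $F$ (this is the hypothesis inherited from Lemma~\ref{lem:theorem2} and the false-twin cleaning), so each order-$r$ cell contains at most one deep vertex. Summing the complexity bound $O(r|F|)$ of Lemma~\ref{lem:complexity} over $r\le 2p$ gives $O(p^2|F|)$ with no per-cell packing argument needed.

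Your first half is fine and matches the paper in substance: your prefix/suffix argument showing that every $F$-disk crossing a regular $D(v)$ contains one of the two extreme regions is the same idea as the paper's observation that the arcs inside $D(v)$ are ``parallel,'' so the neighbors of $v$ form at most two cliques and hence number at most $2p$. Also, the $O(p|F|)$ bound on the complexity of $\mathcal{A}$ and the disjointness of deep disks, while correct, are not needed once the charging is done through neighborhoods rather than containment sets.
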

\begin{proof}
     If $v$ is regular, the neighbors of $v$ in $G$ form two cliques.
    To see this, observe that 
    the part of $\mathcal A$ restricted to $D(v)$ consists of \emph{parallel} arcs.
    That is, the arcs can be sorted in a way that any two consecutive arcs come from the same face of $\mathcal A$. 
    Also, any two arcs which are not consecutive in the sorted list 
    are not incident to a common face of $\mathcal A$. In this case, 
    there are two points $x$ and $y$
    on the boundary of $D(v)$ such that 
    the line segment $xy$ intersects all disks represented by $F$
    intersecting $D(v)$. Then the disks of $F$ intersecting $D(v)$ contains either $x$ or $y$.
    Therefore, the neighbors of $v$ form at most two cliques.
    
    Since each clique has size at most $p$,
    a deep and regular vertex has at most $2p$ neighbors in $G$. 
    Let $v$ be a deep and regular vertex 
    having exactly $r$ neighbors. 
    Consider the additively weighted order-$r$ Voronoi diagram $r\textsf{VD}$ of 
    $F$. Here, the distance between a disk $D'$ of $F$
    and a point $x$ in the plane is defined as $d(c,x)-w$,
    where $c$ is the center of $D'$ and $w$ is the radius of $D'$. 
    Then the center of $D(v)$ is contained in a Voronoi  region of the order-$r$ Voronoi diagram whose sites
    are exactly the neighbors of $v$. Since no two deep vertices have the same neighborhood in $F$,
    each Voronoi region contains at most one deep vertex. 
    Therefore, the number of deep and regular vertices having exactly $r$ neighbors is linear in the complexity of $r\textsf{VD}$ for $r\leq 2p$. 
    By Lemma~\ref{lem:complexity}, its complexity is $O(r|F|)$.
    Thus, 
    the number of deep and regular vertices of $G$ is $O(p^2|F|)$.
\end{proof}

For irregular vertices, we make use of the fact that the ply of the disks represented by those vertices
is at most two. It is not difficult to see that the number of irregular vertices of the first three types is
$O(p|F|)$. This is because the complexity of $\mathcal A$ is $O(p|F|)$. 
To analyze the number of irregular vertices of the fourth type, we use the following lemma. 

\begin{figure}[t]
    \centering
    \includegraphics[width=0.7\textwidth]{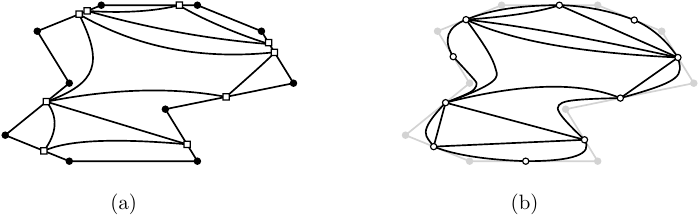}
    \caption{\small (a) $S$ is the polygon with vertices marked with black disks, and each region of $\mathcal F$ has vertices marked with white boxes. (b) The vertices of $H$ are marked with white.}
    \label{fig:planar-subdivision}
\end{figure}
\begin{lemma}\label{lem:planar-subdivision}
    Let $S$ be a connected region with $|S|$ edges in the plane, and $\mathcal F$ be a set of interior-disjoint regions contained in $S$ such that 
    each region intersects at least three edges of $S$. 
    Then the number of regions of $\mathcal F$ is $O(|S|)$. 
    Also, the total number of vertices of the regions of $\mathcal F$ is $O(|S|)$. 
\end{lemma}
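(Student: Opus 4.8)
The plan is to prove this by a planar-graph charging argument, treating the region $S$ together with the regions of $\mathcal{F}$ as defining a planar subdivision and then bounding the number of regions via Euler's formula. The key insight is that each region of $\mathcal{F}$ touches at least three edges of $S$, which will force enough incidence structure that the number of regions cannot exceed a linear function of $|S|$.

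First I would construct an auxiliary planar graph $H$ as follows. For each region $R$ of $\mathcal{F}$, pick three of the edges of $S$ that $R$ intersects, and for each such edge choose a single representative crossing point on the boundary of $S$. This associates to every region of $\mathcal{F}$ at least three "attachment points'' on $\partial S$. I would then build a planar bipartite-like incidence graph whose vertices are (i) the regions of $\mathcal{F}$, contracted each to a single node placed in its interior, and (ii) the edges (or the endpoints of edges) of $S$, and whose arcs record the attachment of each region to the three edges it meets. Since the regions of $\mathcal{F}$ are interior-disjoint and all contained in $S$, these arcs can be routed without crossings inside $S$, so $H$ is planar. The crucial structural fact is that $H$ is a planar graph in which every region-node has degree at least three, while the number of edge-nodes is $O(|S|)$.

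Next I would apply the standard Euler-formula counting for planar graphs. In a simple planar bipartite graph with parts of sizes $a$ and $b$, the total number of edges is $O(a+b)$; more precisely, if one side (the regions) has minimum degree three and the other side (the edges of $S$) has size $O(|S|)$, then counting edges from the $S$-side gives at most $O(|S|)$ edges, while counting from the region-side gives at least $3 \cdot (\text{number of regions})$. Combining these two bounds yields that the number of regions of $\mathcal{F}$ is $O(|S|)$. Some care is needed because multiple regions might attach to the same edge of $S$, so I would argue that the bipartite incidence graph is planar and use the planar bound (edges $\le 2(\text{vertices}) - 4$ for bipartite planar graphs) rather than naive double counting, to control the multiplicity. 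Once the number of regions is bounded by $O(|S|)$, the bound on the total number of vertices of the regions follows because each region contributes $O(1)$ vertices to the bound we are tracking, or alternatively because the regions together with $S$ form a planar subdivision whose total complexity is linear in the number of its faces.

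The main obstacle I expect is making the planarity of the incidence graph $H$ rigorous while handling the routing of attachment arcs through the interiors of the interior-disjoint regions, and ruling out degenerate cases where a region wraps around $S$ and meets the same edge many times or where attachment points coincide. The interior-disjointness of $\mathcal{F}$ is exactly what guarantees the arcs can be drawn without crossing, so I would lean on that hypothesis carefully, choosing the representative attachment points to lie in distinct connected pieces of $R \cap \partial S$ and invoking the fact that each chosen arc stays inside the corresponding region $R$. The final linear bound on the number of vertices of the regions is then a routine consequence, since a planar subdivision with $O(|S|)$ faces and $O(|S|)$ boundary edges has $O(|S|)$ vertices in total by Euler's formula.
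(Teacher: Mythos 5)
Your argument for the first claim (the number of regions is $O(|S|)$) is sound, and it is a reasonable bipartite variant of the paper's construction: attaching each region to three distinct edges of $S$ gives a simple bipartite planar incidence graph, and combining $3|\mathcal F| \le |E|$ with the bipartite planar bound $|E| \le 2(|\mathcal F| + |S|) - 4$ yields $|\mathcal F| = O(|S|)$. The planarity routing you flag as the main obstacle is indeed handled the way you sketch, using interior-disjointness of the regions and the free space outside $S$.

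The genuine gap is in the second claim, bounding the total number of vertices of the regions. Your incidence graph records only three attachment points per region, so it retains no information about how many vertices a region has, and nothing in the hypotheses bounds the per-region complexity: in the paper's application the regions are intersections $D \cap f$ of a disk with a face of an arrangement and can each have arbitrarily many vertices. So your assertion that "each region contributes $O(1)$ vertices" is false. Your fallback claim, that a planar subdivision with $O(|S|)$ faces has $O(|S|)$ vertices by Euler's formula, is also invalid as stated: Euler's formula does not control degree-2 vertices (a single polygon with $t$ vertices is already a subdivision with two faces and $t$ vertices), and you have neither bounded the number of faces of the full subdivision (the connected components of $S \setminus \bigcup \mathcal F$ are faces too) nor established any minimum-degree condition. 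This second claim is precisely where the paper needs a heavier construction: its graph $H$ has a node per edge of $S$ and one edge of $H$ for every consecutive pair $x_i, x_{i+1}$ of vertices of every region, so the total complexity of the regions equals $|E_H|$ up to a constant; the difficulty is that $H$ is then a multigraph (a region can meet the same pair of edges of $S$ repeatedly), which the paper handles by showing each vertex pair carries at most two parallel edges, splitting faces into those bounded by a pair of parallel edges ($F_1$) and the rest ($F_2$), and using $|F_1| \le |E_H|/2$ together with Euler's formula to conclude $|E_H| = O(|V_H| + |F_2|) = O(|S|)$. To repair your proof you would have to attach each region to every connected piece of its intersection with $\partial S$, not just three; but then your incidence graph is no longer simple, the bound $|E| \le 2n - 4$ you rely on no longer applies, and you are forced into the same parallel-edge accounting that constitutes the real content of the paper's proof.
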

\begin{proof}
    We consider the following planar graph $H$: each vertex of $H$ corresponds to an edge of $S$.
    Also, for each region of $\mathcal F$ with vertices $x_1,x_2,\ldots, x_t$ in order, we add edges 
    connecting the vertices corresponding to the edges of $\mathcal S$ containing $x_i$ and $x_{i+1}$ for all indices $1\leq i<t$. 
    In addition to this, for each vertex $v$ of $S$, we add an edge connecting the vertices of $H$ corresponding
    the two edges of $S$ incident to $v$. 
    See Fig.~\ref{fig:planar-subdivision}. 
    There might be parallel edges in $H$, but a pair of vertices has at most two parallel edges.
    Let $F_1$ be the set of faces bounded by a pair of parallel edges, and let $F_2$ be the set of the other faces of $H$.
    Then each region of $\mathcal F$ corresponds to a face of $F_2$. 
    To prove the first part of the lemma, 
    it suffices to show that the size of $F_2$ is linear in the number of edges of $ S$.
    
    By the Euler's formula, $|E_H| \leq |V_H|+|F_1|+|F_2|-2$, where $V_H$ and $E_H$ denote the set of vertices and
    edges of $H$, respectively. Note that no vertex of $H$ has degree at most one. 
    Since every face of $F_2$ is incident to at least three edges of $H$, and every edge of $E_H$ is incident to exactly
    two faces of $H$, $2|E_H| \geq 3|F_2|+2|F_1|$. 
    Thus we have $2|F_1|+3|F_2| \leq 2|V_H| + 2|F_1| + 2|F_2| -4$. 
    That is, $|F_2| \leq 2|V_H|-4$.
    This implies the first part of the lemma. 
    
    Now consider the second part of the lemma. Since a pair of vertices has at most two parallel edges,
    $|F_1| \leq |E_H|/2$. Therefore, $|E_H| \leq 2|V_H|+2|F_2|-4 = O(|V_H|)$. 
    Since every edge of the regions of $\mathcal F$ corresponds to an edge of $H$, 
    the total number of edges of the regions of $\mathcal F$ is $O(| S|)$, and thus 
    the total number of vertices of them is $O(|S|)$. 
\end{proof}

\begin{lemma}\label{lem:additive}
    The number of irregular vertices of $V$
    is $O(p|F|)$. 
\end{lemma}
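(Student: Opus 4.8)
The goal of Lemma~\ref{lem:additive} is to bound the total number of irregular vertices by $O(p|F|)$. The plan is to handle the four types from Definition~\ref{def:irregular} separately, since the text already notes that the first three types together contribute $O(p|F|)$ vertices, the real work is bounding the irregular vertices of the fourth type. Recall $\mathcal A$ is the arrangement of the disks represented by $F$, whose complexity is $O(p|F|)$ because each of the $|F|$ disk boundaries is crossed by $O(p)$ others under ply $p$; consequently the number of vertices of $\mathcal A$ is $O(p|F|)$, the number of edges (arcs) is $O(p|F|)$, and the number of faces is $O(p|F|)$ by Euler's formula. For the first three types I would charge each irregular vertex to a feature of $\mathcal A$: a type-1 disk contains a vertex of $\mathcal A$, a type-2 disk sits inside a face, and a type-3 disk contains a disk of $F$; in each case the ply-two condition on $G-F$ guarantees that only $O(1)$ such disks can be associated to a single feature, so each class has size $O(p|F|)$.

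For the fourth type, the key idea is to invoke Lemma~\ref{lem:planar-subdivision} on a face-by-face basis. Fix a face $f$ of the arrangement $\mathcal A$, viewed as a connected region $S$ bounded by some number of edges $|S_f|$. A disk $D(v)$ of the fourth type intersects three edges of $\mathcal A$ all incident to a common face; I would associate such a $v$ with that face $f$ and consider the portion of $D(v)$ clipped to (a neighborhood of) $f$. Because $G-F$ has ply at most two, the clipped disks form a collection of regions that are interior-disjoint up to a factor of two, so after splitting into $O(1)$ groups of genuinely interior-disjoint regions, Lemma~\ref{lem:planar-subdivision} applies with $\mathcal F$ the set of these clipped regions and $S$ the boundary of $f$. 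The lemma then yields that the number of such regions is $O(|S_f|)$, hence the number of fourth-type irregular vertices charged to $f$ is $O(|S_f|)$.

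Summing over all faces $f$ of $\mathcal A$, the total is $O\big(\sum_f |S_f|\big)$, and since every edge of $\mathcal A$ bounds exactly two faces, $\sum_f |S_f|$ is proportional to the number of edges of $\mathcal A$, which is $O(p|F|)$. Combining the four bounds gives the claimed total of $O(p|F|)$ irregular vertices.

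The step I expect to be the main obstacle is matching the hypotheses of Lemma~\ref{lem:planar-subdivision} cleanly to the fourth-type disks. The lemma requires interior-disjoint regions, whereas the disks of $G-F$ only guarantee ply two, so I must argue that a constant-factor grouping restores genuine interior-disjointness, and I must also verify that ``intersects three edges of $\mathcal A$ incident to the same face'' translates faithfully into ``the clipped region intersects at least three edges of $S$'' — in particular that the three edges the disk meets are exactly boundary edges of the single face $f$ to which $v$ is charged. Care is needed because a disk could potentially straddle several faces, so I would restrict attention to the face in whose interior the relevant three-edge condition is witnessed, ensuring each fourth-type vertex is charged to exactly one face and that the geometric condition of Lemma~\ref{lem:planar-subdivision} holds for that charging.
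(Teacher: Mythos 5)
Your proposal follows essentially the same route as the paper: the first three types are charged to the $O(p|F|)$ features of the arrangement $\mathcal{A}$, and the fourth type is handled by clipping each disk to the face witnessing the three-edge condition and invoking Lemma~\ref{lem:planar-subdivision} face by face, then summing $O(|f|)$ over all faces of $\mathcal{A}$. Your extra step of splitting the ply-two clipped regions into $O(1)$ genuinely interior-disjoint families is actually more careful than the paper (which applies Lemma~\ref{lem:planar-subdivision} to ply-two regions directly); just be aware that this grouping needs a disk-specific justification — ply two bounds the clique number of the underlying disk intersection graph, and disk graphs are $O(\omega)$-colorable — since \emph{arbitrary} ply-two regions need not have an intersection graph of bounded chromatic number.
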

\begin{proof}
 Since the complexity of $\mathcal A$ is $O(p|F|)$, the number of vertices of $G$ 
    belonging to the first three cases is $O(p|F|)$. 
    For the fourth case in Definition~\ref{def:irregular}, let $f$ be a face of $\mathcal A$ such that $D\cap f$ has at least three circular arcs 
    which comes from the boundary of $f$. 
    In this case,  we replace $D$ with $D\cap f$. 
    Since $D\cap f \subseteq D$, all resulting regions 
    have ply two. See Fig.~\ref{fig:irregular}(b). 
    For a face $f$ having $|f|$ edges, 
    there are $O(|f|)$ such regions by Lemma~\ref{lem:planar-subdivision}. Therefore,
    the number of vertices belonging to the fourth case
    is linear in the complexity of $\mathcal A$, which is  $O(p|F|)$.
    Therefore, the lemma holds.
\end{proof}

\subsection{Feedback Vertex Set}\label{sec:fvs-tw}
In this section, we show how to compute a feedback vertex set of size $k$ in 
$2^{\tilde O(k^{7/8})}n^{O(1)}$ time, if it exists. 
Without a geometric representation of $G$, we can do this in
$2^{\tilde O(k^{9/10})}n^{O(1)}$ time.
Let $(G,k)$ be an \textsf{YES}-instance of \fvs{} from the branching and cleaning processes.
Then we have a core $W$ of $G$ of size $O(pk)$. Let $F$ be the union of a feedback vertex set of size $2k$ 
and the vertex set of all triangles of core $W$ of $G$, and let $F^*$ be the union of $F$ and all deep vertices for $F$. The size of $F^*$ is $O(p^3k)$ by Lemmas~\ref{lem:regular} and~\ref{lem:additive}.
Since $F^*$ contains a feedback vertex set, $G-F^*$ is a forest. 

Therefore, the treewidth of $G$ is $O(p^{4}\sqrt{k})$ by Lemma~\ref{lem:theorem2}.
Then we compute a tree decomposition of treewidth  $O(p^{4}\sqrt{k})$, and then
apply a dynamic programming on the tree decomposition to solve the problem in $2^{O(p^4\sqrt{k})}$ time. 
The total running time is $2^{O(k^{9/10}\log k)}n^{O(1)}$ by setting $p=k^{1/10}$. 
The weighted treewidth of $G$ is $O(p^3\sqrt k \log p)$ by Lemmas~\ref{lem:tw-arrangment} and~\ref{lem:theorem2}.
If we have a geometric representation of $G$, we compute a tree decomposition
    $(T,\beta)$ of $G'$ of weighted treewidth  $O(p^{3}\sqrt{k}\log p)$ using Lemma~\ref{lem:tw-without-ply}.
By applying dynamic programming on $(T,\beta)$ as described in Lemma~\ref{lem:dp-fvs}, we can compute a feedback vertex set of size $k$ in $2^{O(p^{3}\sqrt{k}\log p)}$ time.
Note that the dynamic programming on \fvs{} requires global connectivity information but we can track the connectivity by rank-based approach. See also \cite{de2020framework}.
Since we have $2^{O((k/p)\log k)}$ instances from branching process, the total running time is $2^{O(k^{7/8}\log k)}n^{O(1)}$
by setting $p=k^{1/8}$.

\begin{theorem}
Given a disk graph $G$, we can find a feedback vertex set of size $k$
in $2^{O(k^{9/10}\log k)}n^{O(1)}$ time, if it exists.
Given a disk graph $G$ together with its geometric representation, we can do this in 
in $2^{O(k^{7/8}\log k)}n^{O(1)}$ time.
\end{theorem}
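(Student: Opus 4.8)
The plan is to follow the pipeline of Section~\ref{sec:fvs}: branch to bound the ply, bound the number of deep vertices to bound the (weighted) treewidth, solve by dynamic programming, and finally balance the ply parameter $p$. First I would run the two-step branching of Section~\ref{sec:trihit} together with the cleaning step (remove degree-one vertices and keep at most two representatives of each false-twin class); this is sound for \fvs{} because every minimal feedback vertex set contains all-but-one vertex of a false-twin class or none, and the multiplicities can be remembered for the DP. By the \fvs{} analogue of Lemma~\ref{lem:branching} this produces $2^{O((k/p)\log k)}$ instances, one of which is a \textsf{YES}-instance whose graph has ply at most $p$ and a core of size $O(pk)$ (Lemma~\ref{lem:core}); it then suffices to solve every instance and answer \textsf{YES} iff some instance does.

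For the guaranteed \textsf{YES}-instance I would set $F$ to be the union of a $2$-approximate feedback vertex set (of size at most $2k$, computable in polynomial time) and the vertices of all triangles of the core, so that $F$ contains a core and $|F|=O(pk)$. Let $F^*$ add all deep vertices for $F$. The key quantitative input is $|F^*|=O(p^2|F|)=O(p^3k)$, which follows by adding the $O(p^2|F|)$ regular deep vertices of Lemma~\ref{lem:regular} to the $O(p|F|)$ irregular vertices of Lemma~\ref{lem:additive}. Since $F^*$ contains a feedback vertex set, $G-F^*$ is a forest, so the quantity $w$ in Lemma~\ref{lem:theorem2} is $O(1)$ for a suitable contraction $Q\subseteq V\setminus F^*$. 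Plugging $|F^*|=O(p^3k)$ and $w=O(1)$ into Lemma~\ref{lem:theorem2} bounds the treewidth of the arrangement graph of $G/Q$ by $O(\sqrt{p^3k}\cdot p^{1.5})=O(p^3\sqrt k)$; Lemma~\ref{lem:tw-arrangment} then gives treewidth $O(p^4\sqrt k)$ for $G$, and weighted treewidth $O(p^3\sqrt k\log p)$ when a geometric representation is at hand.

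I would finish by computing the corresponding tree decomposition and running the \fvs{} dynamic program of Lemma~\ref{lem:dp-fvs}, which costs $2^{O(w)}n$ in the (weighted) width $w$; here the rank-based approach maintains the global acyclicity information, and the marked edges and false-twin multiplicities are handled as bookkeeping in the tables. Multiplying by the instance count gives $2^{O((k/p)\log k)}\cdot 2^{O(p^4\sqrt k)}$ in the robust case, minimized at $p=k^{1/10}$ to yield $2^{O(k^{9/10}\log k)}n^{O(1)}$, and $2^{O((k/p)\log k)}\cdot 2^{O(p^3\sqrt k\log p)}$ in the geometric case, minimized at $p=k^{1/8}$ to yield $2^{O(k^{7/8}\log k)}n^{O(1)}$.

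The main obstacle is entirely the bound $|F^*|=O(p^2|F|)$: the improvement over~\cite{doi:10.1137/1.9781611977073.80} lives exactly in replacing their $p^6$ factor by $p^2$, and this is where the regular/irregular dichotomy and the additively weighted higher-order Voronoi diagram (Lemmas~\ref{lem:regular},~\ref{lem:additive}, and~\ref{lem:complexity}) do the real work. Once $|F^*|$ is controlled, the only remaining care is the clean invocation of Lemma~\ref{lem:theorem2}: one must exhibit a contraction set $Q$ with $(G-F^*)/Q$ of treewidth $O(1)$, which is immediate from $G-F^*$ being a forest, and then pass from the arrangement graph to $G$ itself through the $p$ versus $\log p$ factors of Lemma~\ref{lem:tw-arrangment}. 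Everything else is the parameter balancing above.
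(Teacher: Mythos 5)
Your proposal is correct and follows essentially the same route as the paper: the two-step branching and cleaning to get ply $p$ and a core of size $O(pk)$, taking $F$ to be a $2k$-size feedback vertex set plus the core triangles, bounding $|F^*|=O(p^3k)$ via Lemmas~\ref{lem:regular} and~\ref{lem:additive}, invoking Lemma~\ref{lem:theorem2} with $w=O(1)$ (as $G-F^*$ is a forest) and Lemma~\ref{lem:tw-arrangment} to get treewidth $O(p^4\sqrt k)$ and weighted treewidth $O(p^3\sqrt k\log p)$, and balancing with $p=k^{1/10}$ and $p=k^{1/8}$ respectively. The only cosmetic difference is that you make explicit the use of a polynomial-time $2$-approximation for the initial feedback vertex set, which the paper leaves implicit.
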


\subsection{Odd Cycle Transversal} 
In this section, we present an $2^{O(k^{19/20}\log k)}n^{O(1)}$-time randomised algorithm for \oct{} when a geometric representation of $G$ is given. 
In the case of \oct, we do not know how to solve the problem without using a geometric representation of $G$. The algorithm in~\cite{doi:10.1137/1.9781611977073.80} also uses a geometric representation in this case.

Let $(G,k)$ be an \textsf{YES}-instance of \oct{} obtained from the
branching and cleaning processes. We have a core $W$ of $G$ of size $O(pk)$.
Let $F$ be the union of the triangles of $W$.  
Furthermore, let $F^*$ be the union of $F$ and the set of all deep  vertices for $F$ of size $O(p^3k)$ by Lemmas~\ref{lem:regular} and~\ref{lem:additive}. 
Let $G^*=G-F^*$. Since $G^*$ does not contain a triangle, it is planar. 
Thus we use a contraction-decomposition theorem on $G^*$ of~\cite{bandyapadhyay2022subexponential}.

\begin{lemma}[Lemma 1.1 of \cite{bandyapadhyay2022subexponential}]\label{lem:theorem12}
    Let $G$ be a planar graph. Then for any $t\in \mathbb N$, there exist disjoint sets  $Z_1,\ldots, Z_p\subseteq V(G)$ such that for every $i\in [t]$ and every $Z'\subseteq Z_i$, treewidth of $G/(Z_i\setminus Z')$ is $O(t+|Z'|)$. Moreover, these sets can be computed in polynomial time.
\end{lemma}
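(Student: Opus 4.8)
The plan is to prove this through the classical contraction-decomposition technique for planar graphs, driven by a breadth-first layering. First I would fix a root (adding a dummy vertex adjacent to one vertex of each component if $G$ is disconnected) and compute the BFS layers $L_0, L_1, \ldots, L_d$, so that every edge of $G$ joins two vertices in the same layer or in consecutive layers. For each residue $i \in \{0, 1, \ldots, t-1\}$ I would define
\[
  Z_i \;=\; \bigcup_{j \,\equiv\, i \pmod t} L_j .
\]
These $t$ sets are pairwise disjoint and are trivially computable in polynomial time. The guiding intuition is that $Z_i$ collects ``every $t$-th layer'', so contracting it collapses $G$ into blocks each spanning only $t$ consecutive layers.

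The core of the argument is the claim that contracting one full residue class yields bounded treewidth: for $t \ge 2$ and any planar graph $H$ carrying such a layering, $\operatorname{tw}(H / Z_i) = O(t)$. Since two layers of the same residue differ in index by at least $t \ge 2$, they are non-adjacent, so every connected component of $H[Z_i]$ lies inside a single layer; contracting $Z_i$ therefore merges only within-layer pieces and never identifies distinct layers. The contracted graph is a planar minor of $H$ in which the layers strictly between two consecutive contracted layers span only $t$ levels, which is precisely the bounded-radius / $O(t)$-outerplanar situation handled by the bidimensionality and contraction-decomposition machinery for planar graphs~\cite{demaine2005subexponential}, giving a tree decomposition of width $O(t)$. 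I expect this to be the main obstacle, because a single contracted layer can break into arbitrarily many supernodes, so one cannot simply glue the per-block decompositions along one interface bag; the $O(t)$ bound has to be extracted from the layered structure of the planar minor rather than from naive gluing.

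Granting the core claim, the general case with an arbitrary opt-out set $Z' \subseteq Z_i$ follows cleanly from subadditivity of treewidth under vertex addition, namely $\operatorname{tw}(X) \le \operatorname{tw}(X - S) + |S|$ for every vertex subset $S$. Take $X = G/(Z_i \setminus Z')$ and let $S$ be the set of (images of) the $|Z'|$ uncontracted vertices. Because $Z'$ and $Z_i \setminus Z'$ are disjoint, deletion and contraction commute here, so
\[
  X - S \;=\; (G - Z') \,/\, (Z_i \setminus Z').
\]
Now $G - Z'$ is again planar and inherits the layering, and $Z_i \setminus Z'$ is exactly its residue class $i$; hence $X - S$ is a full residue-class contraction of a planar graph and has treewidth $O(t)$ by the core claim. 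Therefore $\operatorname{tw}(G/(Z_i \setminus Z')) \le O(t) + |Z'| = O(t + |Z'|)$, as required. Finally, the BFS layering, the residue partition, and the construction of the width-$O(t)$ decomposition from the layered structure all run in polynomial time, which yields the ``moreover'' part of the statement.
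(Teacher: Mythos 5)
Your reduction of the general case to the core claim is the sound part of the proposal: the inequality $\operatorname{tw}(X)\le\operatorname{tw}(X-S)+|S|$, the fact that deleting $Z'$ commutes with contracting the components of $Z_i\setminus Z'$, and the observation that a layering restricts to any induced subgraph are all correct. (Note also that the paper does not prove this lemma at all; it imports it verbatim from~\cite{bandyapadhyay2022subexponential}, so your attempt has to stand entirely on its own.) The fatal problem is the core claim itself: it is \emph{false} that contracting, component by component, every $t$-th BFS layer of a planar graph yields treewidth $O(t)$, and this is exactly the point where you appeal to ``bidimensionality and contraction-decomposition machinery'' instead of giving an argument.

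Concrete counterexample: let $G$ be the $n\times n$ grid and run BFS from the corner vertex $(1,1)$. The layers are the anti-diagonals $L_k=\{(i,j)\,:\, i+j-2=k\}$, and every anti-diagonal is an independent set (adjacent grid vertices always lie in consecutive layers). Hence for any $t\ge 2$ and either residue class, $G[Z_i]$ has no edges, every connected component of $G[Z_i]$ is a single vertex, and $G/Z_i=G$, whose treewidth is $n$, not $O(t)$. So with primal BFS layers the contraction can be a complete no-op, and the contracted graph is in no sense $O(t)$-outerplanar: the blocks between consecutive ``contracted'' layers are glued along arbitrarily many supernodes --- precisely the obstacle you flagged and then set aside. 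This is the fundamental reason contraction decomposition is much harder than Baker's deletion decomposition (your sets $Z_i$ do satisfy the deletion analogue, since removing a residue class leaves components spanning at most $t-1$ consecutive layers). The known planar proofs do not use primal BFS layers: Klein-style contraction decomposition works in the planar dual, where contracting an edge class of $G$ corresponds to deleting it in $G^*$ and dual BFS levels supply the outerplanarity bound; and the robust vertex version stated here, with the $Z'$ opt-outs and the $O(t+|Z'|)$ bound, is the actual content of Lemma~1.1 of~\cite{bandyapadhyay2022subexponential}, whose proof is substantially more involved. Since the counterexample is the most basic planar graph there is, no local patch of the BFS-layer construction can rescue the argument; the sets $Z_i$ themselves must be built differently.
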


In particular, we set $t=\sqrt k$ and compute $t$ sets
$Z_1,\ldots,Z_{t}$ of vertices of $V\setminus F^*$ such that
for any $Z_i$ and any $Z'\subseteq Z_i$, $G^*/(Z_i\setminus Z')$ has treewidth at most $O(\sqrt{k}+|Z'|)$.
Then there is an index $i$ such that for a fixed odd cycle transversal $S$, $S\cap Z_i$ has size $O(\sqrt{k})$. We iterate over every choice of $i$, and every choice of $Z'=S\cap Z_i$ of size at most $O(\sqrt{k})$.
Due to the following lemma,
the number of choices of $Z'$ we have to consider is  $(k^{O(1)})^{\sqrt k}=2^{O(\sqrt k\log k)}$. 

\begin{lemma}[\cite{doi:10.1137/1.9781611977073.80}]
One can compute a candidate set of size $k^{O(1)}$ for a solution of the \oct{} problem in polynomial time with success probability at least $1-1/2^n$.
\end{lemma}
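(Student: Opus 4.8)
The goal is to compute a set $C\subseteq V(G)$ of size $k^{O(1)}$ such that, with probability at least $1-2^{-n}$, some minimum odd cycle transversal $S$ of $G$ satisfies $S\subseteq C$. This is exactly what the enclosing algorithm needs: once $S\subseteq C$ we have $S\cap Z_i\subseteq C$ for every index $i$, so letting $Z'$ range over all subsets of $C$ of size $O(\sqrt k)$ is guaranteed to produce the correct guess $Z'=S\cap Z_i$, and the number of such subsets is $\binom{|C|}{O(\sqrt k)}=(k^{O(1)})^{O(\sqrt k)}=2^{O(\sqrt k\log k)}$. Hence it suffices to produce such a $C$ in randomized polynomial time.

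The first step is to restate \oct{} as a bounded-size separation problem. Via the standard reduction underlying the iterative-compression algorithm of Reed, Smith and Vetta, deleting a vertex set so that the remainder is bipartite is equivalent, after fixing a reference side assignment, to finding a small vertex separator between two terminal sets in an auxiliary graph built from two copies of $G$. Under this encoding every inclusion-minimal \oct{} of size at most $k$ corresponds to a minimal vertex separator of size at most $k$ between the two terminal sets, so it is enough to compute a set $C$ that \emph{covers} (contains the vertices of) at least one such minimal separator for each relevant terminal pair.

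I would obtain $C$ from the cut-covering machinery of Kratsch and Wahlstr\"om. Associating a gammoid with the terminals of the auxiliary graph and computing a representative family of the vertex sets that are linkable to the terminals yields, in polynomial time, a set $C$ of size $k^{O(1)}$ that preserves at least one minimal separator of size at most $k$; the size bound is the usual bound on representative families for a matroid of rank $O(k)$. Translating back through the reduction, $C$ contains some minimum \oct{}, so $C$ is the desired candidate set.

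The only randomized ingredient is the matroid representation, and it is where the success probability is controlled. A gammoid on the vertices of the auxiliary graph is represented by a matrix whose entries are drawn uniformly at random from a finite field $\mathbb F$; by the Schwartz--Zippel lemma this matrix realises the correct independence function except with probability at most $\mathrm{poly}(n)/|\mathbb F|$. Taking $|\mathbb F|=2^{\Theta(n)}$, so that a field element still fits in $O(n)$ bits and all arithmetic stays polynomial, drives the failure probability below $2^{-n}$; conditioned on a faithful representation, the representative-family computation and the extraction of $C$ are deterministic. I expect this to be the main obstacle: one must certify that the random gammoid representation is faithful with exponentially small error, because the outer algorithm takes a union bound over the $2^{O((k/p)\log k)}$ branching instances and all enumerated guesses, and every one of these relies on $C$ being correct.
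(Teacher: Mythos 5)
Your proposal is correct and matches the intended derivation: the paper itself gives no proof of this lemma (it imports the statement verbatim from Lokshtanov et al.), and that source obtains the candidate set exactly as you do --- reducing \oct{} to terminal separation in the Reed--Smith--Vetta style and invoking Kratsch--Wahlstr\"om's gammoid/representative-set cut-covering machinery, with the randomness confined to the matrix representation of the gammoid over a field of size $2^{\Theta(n)}$ so that the failure probability drops below $2^{-n}$. Your reading of how the lemma is consumed by the outer algorithm (enumerating $Z'=S\cap Z_i$ as subsets of the candidate set, giving $2^{O(\sqrt{k}\log k)}$ guesses) is also the paper's.
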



For each iteration, we contract $Z_i\setminus Z'$ and then
apply a dynamic programming on a tree decomposition of $G/(Z_i\setminus Z')$.
The number of iterations is $2^{O(\sqrt{k}\log k)}$, and $G^*/(Z_i\setminus Z')$ has treewidth $O(\sqrt k)$. 
By Lemma~\ref{lem:tw-arrangment}, the weighted treewidth and the treewidth of $G/(Z_i\setminus Z')$ are $O(p^3k^{3/4}\log p)$ and $O(p^4k^{3/4})$, respectively.
We obtain the desired running times by setting $p=k^{1/16}$ (for a robust algorithm) and $p=k^{1/20}$ (for a non-robust algorithm).
Then Lemma~\ref{lem:dp-oct} implies the following theorem.

\begin{theorem}
Given a disk graph $G$, we can find a odd cycle transversal of size $k$
in $2^{O(k^{19/20}\log k)}n^{O(1)}$ time \emph{w.h.p}.
Given a disk graph $G$ together with its geometric representation, we can do this in 
in $2^{O(k^{15/16}\log k)}n^{O(1)}$ time \emph{w.h.p}.
\end{theorem}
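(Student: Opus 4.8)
The plan is to assemble the pieces developed in this section into the two claimed running times, following the branch--decompose--contract--DP template and balancing the ply parameter $p$ at the end. First I would run the two-step branching process of Section~\ref{sec:tri-branching} together with the false-twin cleaning, which by Lemma~\ref{lem:branching} produces $2^{O((k/p)\log k)}$ instances, one of which is a \textsf{YES}-instance $(G,k)$ whose realization has ply at most $p$ and which carries a core $W$ of size $O(pk)$ (Lemma~\ref{lem:core}); since every minimal odd cycle transversal keeps all but at most two vertices of each false-twin class, retaining two representatives per class is safe provided the dynamic program remembers the discarded multiplicities. Fixing such a \textsf{YES}-instance, I would let $F$ be the union of the triangles of $W$ and let $F^{*}$ be $F$ together with all vertices deep for $F$; by Lemmas~\ref{lem:regular} and~\ref{lem:additive} we have $|F^{*}|=O(p^{3}k)$, and because $F$ contains a core, $G^{*}:=G-F^{*}$ is triangle-free and hence planar.

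Next I would invoke the contraction-decomposition theorem (Lemma~\ref{lem:theorem12}) on the planar graph $G^{*}$ with $t=\sqrt{k}$, obtaining disjoint sets $Z_{1},\dots,Z_{t}\subseteq V(G^{*})=V\setminus F^{*}$ such that $G^{*}/(Z_{i}\setminus Z')$ has treewidth $O(\sqrt{k}+|Z'|)$ for every $i$ and every $Z'\subseteq Z_{i}$. Let $S$ be a hypothetical odd cycle transversal of size $k$. Since the $Z_{i}$ are disjoint and there are $\sqrt{k}$ of them, some index $i$ satisfies $|S\cap Z_{i}|=O(\sqrt{k})$, so I would guess this $i$ and the set $Z'=S\cap Z_{i}$. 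Using the polynomial-size candidate set for an odd cycle transversal computed with high probability (the randomized lemma cited just above the theorem), the number of guesses for $Z'$ is $(k^{O(1)})^{O(\sqrt{k})}=2^{O(\sqrt{k}\log k)}$, so together with the $\sqrt{k}$ choices of $i$ there are $2^{O(\sqrt{k}\log k)}$ iterations per instance.

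For a fixed guess I would contract $Q:=Z_{i}\setminus Z'$ in $G$ and solve \oct{} on $G/Q$ by dynamic programming. The treewidth bookkeeping is the crux: with $w:=\mathrm{tw}(G^{*}/Q)=O(\sqrt{k})$ and $|F^{*}|=O(p^{3}k)$, Lemma~\ref{lem:theorem2} bounds the treewidth of the arrangement graph of $G/Q$ by $O(\sqrt{|F^{*}|\,w}\cdot p^{1.5})=O(\sqrt{p^{3}k\cdot\sqrt{k}}\cdot p^{1.5})=O(p^{3}k^{3/4})$, whence Lemma~\ref{lem:tw-arrangment} gives weighted treewidth $O(p^{3}k^{3/4}\log p)$ and treewidth $O(p^{4}k^{3/4})$ for $G/Q$. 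Running the \oct{} dynamic program of Lemma~\ref{lem:dp-oct} then costs $2^{O(w\log w)}$ in the respective width. Multiplying the $2^{O((k/p)\log k)}$ branching instances, the $2^{O(\sqrt{k}\log k)}$ iterations, and the DP cost, I would balance $p$: taking $p=k^{1/16}$ equates $k/p$ and $p^{3}k^{3/4}$ at $k^{15/16}$ using the weighted-treewidth dynamic program available once a geometric representation is given, while taking $p=k^{1/20}$ equates $k/p$ and $p^{4}k^{3/4}$ at $k^{19/20}$ using the treewidth-based dynamic program, which yields the two stated bounds.

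The step I expect to be most delicate is the correctness of the contraction for \oct: I must argue that an odd cycle transversal of $G$ that avoids $Q$ corresponds exactly to one of the quotient $G/Q$, so that solving the problem on $G/Q$, where each connected component of $G[Q]$ collapses to a single vertex whose internal two-coloring is forced, recovers an optimal solution of $G$. This requires that each such component be bipartite and be assigned a consistent side-parity in the DP state, and that the multiplicities recorded during false-twin cleaning be threaded through the DP; these are exactly the points deferred to Lemma~\ref{lem:dp-oct}. The remaining subtlety is probabilistic: the candidate-set computation succeeds only with high probability, which is why the final guarantee is stated with high probability.
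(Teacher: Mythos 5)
Your proposal is correct and follows essentially the same route as the paper's proof: the same branching/cleaning, the same $F^{*}$ of size $O(p^{3}k)$ via Lemmas~\ref{lem:regular} and~\ref{lem:additive}, the contraction-decomposition of the triangle-free (hence planar) graph $G-F^{*}$ with $t=\sqrt{k}$, guessing $Z'=S\cap Z_i$ from the randomized candidate set, the treewidth bounds $O(p^{3}k^{3/4}\log p)$ and $O(p^{4}k^{3/4})$ via Lemmas~\ref{lem:theorem2} and~\ref{lem:tw-arrangment}, and the same balancing of $p$. Your assignment of $p=k^{1/20}$ to the robust case and $p=k^{1/16}$ to the representation-given case is the consistent one (the paper's parenthetical labels are swapped relative to its own theorem statement), and your flagged subtlety about contracted components needing consistent bipartition states in the DP is a real point that the paper itself leaves implicit in its appeal to Lemma~\ref{lem:dp-oct}.
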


%
%
%
%

\section{DP on a Tree Decomposition of Bounded Treewidth}\label{sec:dp}
In this section, we give dynamic programming algorithms for \trihit, \fvs, and \oct  
on bounded weighted treewidth graphs. 
Let $(T,\beta)$ be a tree decomposition of $G$ with weighted width $O(w)$. 
Recall that $\beta(t)$ has a clique partition of weight $O(w)$. 
Let $\mathcal C(t)$ be the set of cliques in the clique partition of $\beta(t)$, and 
let $\mathcal D(t)$ be the set of disks represented by the vertices of $\beta(t)$.
Also, we slightly abuse the notation so that
    $\mathcal{D}(C)$ denotes the set of disks represented by the vertices of a clique $C$.
For the set $\mathcal V\subset \mathcal D$ of disks, we again abuse the notation so that $G[\mathcal V]$ denotes the subgraph of $G$ induced by the set of vertices whose corresponding disk is contained in $\mathcal V$.
A tree decomposition $(T,\beta)$ is \emph{nice} if $T$ is a rooted tree and each node $t$ of $T$ belongs to one of the four categories: 
\begin{itemize}\label{def:nicetd}
        \item \textbf{Leaf node.} $\beta(t)$ is empty. 
        \item \textbf{Introduce node.} $t$ has exactly one child $t'$ s.t $\mathcal{C}(t)=\mathcal{C}(t')\cup\{C\}$ for some $C\notin \mathcal{C}(t')$ 
         \item \textbf{Forget node.} $t$ has exactly one child $t'$ s.t $\mathcal{C}(t)=\mathcal{C}(t')\setminus\{C\}$ for some $C\in \mathcal{C}(t')$ 
        \item \textbf{Join node.} $t$ has two children $t_1,t_2$ s.t $\mathcal{C}(t)=\mathcal{C}(t_1)=\mathcal{C}(t_2)$. 
\end{itemize}

It is not difficult to see that, 
given a tree decomposition with weighted width $w$, we can compute a nice tree decomposition with weighted width $O(w)$ in polynomial time.
It is well known that we can convert a tree decomposition of width $w$ into a tree decomposition of width $O(w)$~\cite{pBook}.
A slight modification of this conversion preserves the weighted width and the size of a tree asymptotically. 
Therefore, we may assume that $(T,\beta)$ is a nice tree decomposition of weighted width $O(w)$.
Moreover, the number of nodes of $T$ is $O(wn)$. 
Also, let $\mathcal{V}_t$ be the union of $\mathcal{D}(t')$ for all nodes $t'$ in the subtree of $T$ rooted at $t$. 

\begin{lemma}\label{lem:dp-trihit}
Given a tree decomposition $(T,\beta)$ of $G$ of weighted treewidth $w$,
we can solve \trihit{} in $2^{O(w)} n$ time. 
\end{lemma}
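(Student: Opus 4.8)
The plan is to run a bottom-up dynamic programming over the nice tree decomposition $(T,\beta)$, where the state at a node $t$ records which vertices of the bag $\beta(t)$ are \emph{retained} (i.e. not placed in the hitting set). Concretely, for \trihit{} I would define, for each node $t$ and each subset $S\subseteq\beta(t)$ with $G[S]$ triangle-free, the quantity $\mathrm{dp}[t][S]$ as the minimum number of deleted vertices over all $F\subseteq\mathcal V_t$ with $F\cap\beta(t)=\beta(t)\setminus S$ such that $G[\mathcal V_t\setminus F]$ contains no triangle. The answer is read off at the root $r$, where $\beta(r)=\emptyset$, and a hitting set of size $k$ exists iff $\mathrm{dp}[r][\emptyset]\le k$. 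If the instance carries marked edges, I simply forbid every state $S$ that retains both endpoints of a marked edge; this only prunes states.

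The crucial counting observation is that I only ever store triangle-free states. Since $G[S]$ is triangle-free, $S$ contains at most two vertices from each clique $C_i$ of the clique partition $\mathcal C(t)$ of $\beta(t)$, so the number of stored states per node is at most $\prod_i\bigl(1+|C_i|+\binom{|C_i|}{2}\bigr)=\prod_i O(|C_i|^2)=2^{O(\sum_i(\log|C_i|+1))}=2^{O(w)}$, exactly the bound noted in Section~\ref{sec:weighted-tw}. Since $T$ has $O(wn)$ nodes and every transition below runs in time polynomial in the number of states, the total running time is $2^{O(w)}n$.

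It remains to specify the transitions, which follow the standard four cases. At a \textbf{leaf} node the only state is $S=\emptyset$ with value $0$. At an \textbf{introduce} node adding clique $C$ with child $t'$, I extend each child state $S'$ by every choice of retained vertices $S_C\subseteq C$ with $S'\cup S_C$ triangle-free, setting $\mathrm{dp}[t][S'\cup S_C]=\min\{\mathrm{dp}[t'][S']+|C\setminus S_C|\}$; there are only $O(|C|^2)$ relevant choices of $S_C$. At a \textbf{forget} node removing clique $C$, I project, taking $\mathrm{dp}[t][S]=\min_{S_C}\mathrm{dp}[t'][S\cup S_C]$ over all $S_C\subseteq C$ for which $S\cup S_C$ was a valid state at $t'$. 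At a \textbf{join} node with children $t_1,t_2$ (so $\beta(t)=\beta(t_1)=\beta(t_2)$), I combine equal bag states by $\mathrm{dp}[t][S]=\mathrm{dp}[t_1][S]+\mathrm{dp}[t_2][S]-|\beta(t)\setminus S|$, subtracting the bag deletions counted on both sides.

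The step I expect to be the main obstacle is the correctness invariant, namely that $\mathrm{dp}[t][S]$ indeed accounts for \emph{every} triangle of $G[\mathcal V_t]$. The heart of the argument is that every triangle of $G$ lies inside some bag (its three vertices form a clique, and every clique of a graph is contained in some bag of any tree decomposition), together with the claim that introducing a clique $C$ can only create triangles all of whose vertices lie in the current bag $\beta(t)$: a vertex of $C$ first appears at $t$, so by connectivity of the decomposition it is non-adjacent to every vertex of $\mathcal V_{t'}\setminus\beta(t')$, and hence any newly formed triangle is confined to $\beta(t)$ and is killed precisely by the requirement that $S$ be triangle-free. The analogous separation argument at join nodes, that $\mathcal V_{t_1}\cap\mathcal V_{t_2}=\beta(t)$ forbids triangles straddling the two subtrees, guarantees that no triangle is missed or double-counted. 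Once this invariant is established, optimality follows by a routine induction on $T$.
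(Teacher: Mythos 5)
Your proposal is correct and follows essentially the same approach as the paper's proof: a bottom-up DP over the nice clique-partitioned tree decomposition, with the same $2^{O(w)}$ bound on states per bag (at most two retained vertices per clique of the partition), the same four transition formulas, and the same separation arguments showing triangles created at introduce nodes are confined to the bag and triangles at join nodes cannot straddle the two subtrees. The only differences are cosmetic: you index states by the retained bag vertices rather than the deleted ones, and you additionally note how marked edges are handled.
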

\begin{proof} 
For each node $t$ and each subset $\mathcal{S}$ of $\mathcal D(t)$, we define the subproblem of \trihit{} as 
\begin{align*}
    c[t,\mathcal{S}]=\begin{cases}\min_{\mathcal{S}\subset \mathcal{S}'\subset \mathcal{V}_t} |\mathcal{S}'|  & \text{     s.t }  G[\mathcal{V}_t\setminus \mathcal S'] \text{  is triangle-free}\\\infty & \text{  if there is no such } \mathcal{S}'
    \end{cases}
\end{align*}
    A solution of \trihit{} must contain all but two vertices from each clique. 
    Therefore, since 
    \begin{align*}
        \prod_{C_i\in \mathcal C} |C_i|^2=\exp \left( \sum_{C_i\in \mathcal C} 2\log |C_i| \right) =2^{O(w)},
    \end{align*}
    for each node $t$, all but $2^{O(w)}$ 
    subproblems are set to infinity. 
    We say that a subproblem $c[t, \mathcal S]$ \emph{has} a solution if it has a finite value. Moreover, we say $\mathcal S'$ is a \emph{partial solution} of $c[t,\mathcal S]$ if $\mathcal S\subset \mathcal S'\subset \mathcal V_t$ and $G[\mathcal V_t\setminus \mathcal S']$ is triangle-free. We say that a partial solution is \emph{optimal} for $c[t,\mathcal S]$ if its size is minimum among all partial solutions.
    We give formulas for each of the four cases, and then apply these formulas in a bottom-up manner on $T$.
    Eventually, we compute $c[r,\emptyset]$, where $r$ is the root of $T$.
    
    \paragraph{\textbf{Leaf node.}} If $t$ is a leaf, we have only one subproblem $c[t,\emptyset]=0$. 
    \paragraph{\textbf{Introduce node.}} Let $t$ has one child $t'$ with $\mathcal{C}(t)=\mathcal{C}(t')\cup\{C\}$. We claim that the following formula holds: 
    \begin{align*}
    c[t,\mathcal{S}]= \begin{cases}c[t',\mathcal{S}\setminus \mathcal{D}(C)]+|\mathcal{S}\cap \mathcal{D}(C)| & \text{if } G[\mathcal{D}(t)\setminus \mathcal S] \text{ is triangle-free}\\ \infty & \text{otherwise.} 
    \end{cases}
    \end{align*}
    Consider a possible optimal partial solution $\mathcal{S}'$ for the subproblem $c[t,\mathcal{S}]$. 
    Then $\mathcal{S}'\cap 
    \mathcal{D}(t')$ is exactly $\mathcal{S}\setminus \mathcal{D}(C)$, and $\mathcal{S}'\setminus \mathcal D(C)$ is also a partial solution of $c[t',\mathcal{S}\setminus \mathcal{D}(C)]$. To see the $\mathcal{S}'\setminus \mathcal D(C)$ is optimal for $c[t',\mathcal{S}\setminus \mathcal{D}(C)]$, suppose that there is an optimal partial solution $\mathcal{S}''$ for $c[t',\mathcal{S}\setminus \mathcal{D}(C)]$ such that $\mathcal{S}''\cap \mathcal D(t')=\mathcal{S}\setminus \mathcal{D}(C)$. 
    Then $\mathcal{S}''\cup \mathcal{S}$ is also a solution of $c[t, \mathcal S]$ because the disk $D$ in $\mathcal D(C)$ dose not intersect to a disk in $\mathcal V_t\setminus \mathcal D(t)$. Therefore, $\mathcal{S}'\setminus \mathcal D(C)$ is an optimal partial solution of $c[t', \mathcal S\setminus \mathcal D(C)]$, so the formula holds. 

    \paragraph{\textbf{Forget node.}} Let $t$ has one child $t'$ with $\mathcal{C}(t)=\mathcal{C}(t')\setminus\{C\}$. We claim that the following formula holds:
    \begin{align*}
    c[t,\mathcal{S}]= \min_{\mathcal{A}\subset \mathcal{D}(C)} c[t',\mathcal{S}\cup \mathcal{A}] 
    \end{align*}
    For an optimal partial solution $\mathcal{S}'$ for $c[t,\mathcal S]$, we set $\mathcal{A}=\mathcal{S}'\cap \mathcal{D}(C)$. Since $\mathcal{V}_t=\mathcal{V}_{t'}$, $\mathcal{S}'$ is a partial solution of $c[t', \mathcal S\cup \mathcal A]$.
    Conversely, suppose $c[t',\mathcal{S}\cup \mathcal{A}]$ is minimum among all $\mathcal A\subset \mathcal D(C)$.
    Let $\mathcal S'$ be an optimal partial solution of $c[t',\mathcal S\cup \mathcal A]$. 
    Then $\mathcal{S}'$ is also a partial solution of $c[t,\mathcal C]$ because 
    $\mathcal V_{t_1}=\mathcal V_t$ and $\mathcal S'\cap \mathcal D(t)=\mathcal S$.
    
    \paragraph{\textbf{Join node.}} Let $t$ has two children $t_1$ and $t_2$. We claim that 
    \begin{align*}
        c[t,\mathcal{S}]=c[t_1,\mathcal{S}]+c[t_2,\mathcal{S}]-|\mathcal{S}|.
    \end{align*}
    Let $\mathcal S'$ is an optimal partial solution of $c[t,\mathcal S]$. Then  $\mathcal S\cap \mathcal V_{t_1}$ (and $\mathcal S\cap \mathcal V_{t_2}$) is a partial solution of $c[t_1, \mathcal S]$ (and $c[t_2,\mathcal S]$).
    Conversely, if there are optimal partial solutions $\mathcal S_1$ of $c[t_1,\mathcal S]$ and $\mathcal S_2$ of $c[t_2, \mathcal S]$, we show that $\mathcal S_1\cup \mathcal S_2$ is a partial solution of $c[t,\mathcal S]$.
    Suppose
     there are three disks $D_x,D_y,D_z\in \mathcal{V}_{t_1}\cup \mathcal{V}_{t_2}$ that pairwise intersect. 
    We may assume that $D_x,D_y\in \mathcal{V}_{t_1}$. If $D_z\in \mathcal{V}_{t_1}$, 
    an induced subgraph $G[\{D_x,D_y,D_z\}]$ of $G[\mathcal V_{t_1}]$ is a triangle.
    Then the partial solution $\mathcal S_1$ of $c[t_1, \mathcal S]$ must hit one of the three disks $D_x,D_y$ and $D_z$. 
    Otherwise, both $D_x$ and $D_y$ intersect with $D_z\in \mathcal{V}_{t_2}\setminus \mathcal{V}_{t_1}$, which implies that $D_x,D_y\in \mathcal{D}(t_2)$. Then, the partial solution $\mathcal S_2$ of $c[t_2, \mathcal S]$ hits the triangle. For both cases, $\mathcal S_1\cup \mathcal S_2$ hits the set of three disks. Therefore,   $\mathcal S_1\cup \mathcal S_2$ is a partial solution of $c[t, \mathcal S]$.

    Now we compute all finite $c[t,\mathcal{S}]$ of node $t$ in $2^{O(w)}$ time.
    The total time complexity is $2^{O(w)}n$ because the number of nodes in tree is $O(wn)$.  
\end{proof}

\begin{lemma}\label{lem:dp-fvs}
Given a tree decomposition $(T,\beta)$ of $G$ of weighted treewidth $w$,
we can solve \fvs{} in $2^{O(w)} n$ time. 
\end{lemma}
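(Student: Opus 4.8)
The plan is to adapt the dynamic programming of Lemma~\ref{lem:dp-trihit} to \fvs, enriching each DP state so that it also records the connectivity of the surviving forest, and then to compress the set of states using the rank-based method so that the running time stays $2^{O(w)}n$ rather than $2^{O(w\log w)}n$. As before, I would work on a nice tree decomposition $(T,\beta)$ of weighted width $O(w)$ with $O(wn)$ nodes. The first observation is that any three vertices of a clique form a triangle, hence a cycle, so a feedback vertex set keeps at most two vertices of each clique $C\in\mathcal C(t)$. Consequently the number of admissible choices $\mathcal S\subseteq \mathcal D(t)$ for the part of the solution lying in the bag is $\prod_{C\in\mathcal C(t)}O(|C|^2)=2^{O(w)}$, exactly as in the triangle case, and moreover the set of \emph{surviving} bag vertices (those not in $\mathcal S$) has size at most $2|\mathcal C(t)|=O(w)$, since each clique contributes at least $1$ to the weight.

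Each DP state would be a triple $(t,\mathcal S,\mathcal P)$, where $\mathcal S$ is the guessed intersection of the solution with the bag and $\mathcal P$ is a partition of the surviving bag vertices $\mathcal D(t)\setminus\mathcal S$; the value stored is the minimum size of a partial solution $\mathcal S'$ with $\mathcal S\subseteq \mathcal S'\subseteq \mathcal V_t$ and $\mathcal S'\cap \mathcal D(t)=\mathcal S$ such that $G[\mathcal V_t\setminus \mathcal S']$ is a forest and its connected components, restricted to the bag, induce exactly the partition $\mathcal P$. I would give transition formulas for the four node types along the lines of Lemma~\ref{lem:dp-trihit}, the only new ingredient being the bookkeeping of $\mathcal P$ and the rejection of any transition that would create a cycle. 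At an introduce node I choose which of the at most two surviving vertices of the new clique join the forest, attach them to the blocks of their already-present neighbours, and discard the transition if two of these neighbours lie in the same block (which would close a cycle). A forget node marginalises over the part of $\mathcal P$ supported on the forgotten clique. A join node merges the two partitions of the shared surviving set, again discarding the result whenever the merge identifies two vertices that were already in a common block; sizes are combined by inclusion--exclusion as in the triangle case. Correctness follows from the same exchange arguments used for Lemma~\ref{lem:dp-trihit}, together with the standard fact that acyclicity of $G[\mathcal V_t\setminus\mathcal S']$ is certified by the absence of such block collisions.

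The main obstacle is the running time. Because a bag has $O(w)$ surviving vertices, the number of partitions $\mathcal P$ is the Bell number $2^{O(w\log w)}$, so the naive table has $2^{O(w\log w)}$ entries per node and the algorithm would run in $2^{O(w\log w)}n$ time. To reach $2^{O(w)}n$ I would replace the full table at each node by a \emph{representative set} of partitions, using the rank-based approach of Bodlaender et al.\ as employed for \fvs{} in~\cite{de2020framework,pBook}. The key fact is that the relevant information about a family of weighted partitions is captured by a connectivity matrix over $\mathrm{GF}(2)$ whose rank is $2^{O(w)}$; hence one can compute, by Gaussian elimination, a subfamily of size $2^{O(w)}$ that is representative, i.e.\ that preserves, for every possible future partition of the boundary, the existence and optimal weight of a completion to a forest. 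Maintaining such a representative set after every introduce, forget and join operation keeps each table at size $2^{O(w)}$ while never losing an optimal solution, and each compression costs $2^{O(w)}$ time. Since $T$ has $O(wn)$ nodes, the overall running time is $2^{O(w)}n$, and evaluating the root state $c[r,\emptyset]$ yields a minimum feedback vertex set. The delicate part is precisely this representative-set machinery; the rest is a routine adaptation of the triangle-hitting dynamic program.
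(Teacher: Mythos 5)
Your overall plan coincides with the paper's: the same state space $(t,\mathcal S,\mathcal P)$ with $2^{O(w)}$ admissible sets $\mathcal S$ coming from the clique partition, and the rank-based method of Bodlaender et al.~\cite{bodlaender2015rank} to compress the $w^{O(w)}$ partitions per node down to $2^{O(w)}$ representatives. The gap is in the one step you yourself call delicate and then assert without justification: you claim the rank-based machinery yields a subfamily preserving, ``for every possible future partition of the boundary, the existence and optimal weight of a completion \emph{to a forest}.'' That is not what the machinery provides. The reduce operation of~\cite{bodlaender2015rank} is built on the connectivity matrix $M[p,q]=1$ iff the join of the partitions $p$ and $q$ is the single-block partition; its low rank over $\mathrm{GF}(2)$ certifies representativity only with respect to completions that make the graph \emph{connected}. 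Forest-extendability is governed by the opposite dominance order: a \emph{finer} bag partition is always at least as good for avoiding future cycles, whereas a coarser one is better for achieving connectivity. Concretely, a partial solution whose bag partition is all singletons but whose weight is slightly larger can be discarded by the connectivity-based reduce in favour of a cheaper, coarser one, even though in some futures only the finer one extends to an acyclic graph; so applying the reduce exactly as you describe can lose the optimal feedback vertex set.

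The missing idea --- precisely the ``small issue'' the paper's proof flags and resolves --- is to re-encode acyclicity as a connectivity-\emph{maximization} problem before invoking the rank-based reduce: add a special universal vertex $v_0$ to $G[\mathcal V_t]$ (raising the weighted width by one) and use the counting fact that a graph on $k$ vertices with $k-1$ edges is a tree iff it is connected. The task then becomes deciding whether, after deleting the partial solution, the remaining graph together with $v_0$ admits a connected spanning subgraph with the right vertex and edge counts, and for that objective the connectivity-based representative sets are sound. With this transformation in place, your transitions and the $2^{O(w)}n$ accounting go through as you describe.
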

\begin{proof}(Sketch.)
A dynamic programming for \fvs{} is more involved because we need to maintain global connectivity information. 
In particular, for each node $t$, each subset $\mathcal S$ of $\mathcal D(t)$ and each partition $\mathcal P$ of $\mathcal D(t)\setminus \mathcal S$, we define the subproblem of \fvs{} as 
\begin{align}
    c[t,\mathcal S, \mathcal P]=
    & \min_{S\subset S'\subset \mathcal V_t} |\mathcal S'| \text{ s.t } G[\mathcal V_t\setminus \mathcal S'] \text{ is cycle-free, and}  \\
    & \text{for each } P\in \mathcal P, \text{ the vertices corresponding to disks in } P \\
    & \text{are contained in a same } \text{connected component of} G[\mathcal V_t\setminus S']
\end{align}

We say that the subproblem \emph{has} a solution if such a superset $\mathcal S'$ of $\mathcal S$ exists. Then we say $\mathcal S'$ is a \emph{partial solution} of $c[t,\mathcal S, \mathcal P]$.
Similar to Lemma~\ref{lem:dp-trihit}, we have $2^{O(w)}$ choices of $\mathcal S$ that 
$c[t,\mathcal S, \cdot]$ has a solution.

Intuitively, if $\mathcal S'$ is a partial solution and $D_1,D_2$ are contained in the same partition class of $\mathcal P$, then two vertices of $G$  correspond to $D_1$ and $D_2$ are in the same connected component of $G[\mathcal V_t\setminus S']$. However, the number of choices of $\mathcal P$ is $w^{O(w)}$ because $|\mathcal V_t\setminus \mathcal S'|=O(w)$. 
To reduce the number of subproblems we have to consider, we use a rank-based approach~\cite{bodlaender2015rank}. 
The rank-based approach ensures that for the fixed $\mathcal S$, there are $2^{O(w)}$ partitions among all possible $w^{O(w)}$ choices of $\mathcal P$ so that they cover all possibilities of the connected components of $G[\mathcal V_t\setminus \mathcal S']$.

One small issue is that the 
rank-based approach is designed to get maximum connectivity whereas in \fvs{} we are aim to minimize the connectivity. As did in \cite{bodlaender2015rank}, we add a special universal vertex $v_0$ to the graph $G[\mathcal V_t]$ and increase the weighted width by 1. Now the task is to determine if there is a connected subgraph of $G[\mathcal V_t]$ that contains $v_0$ after deleting a partial solution $\mathcal S'$ from $G[\mathcal V_t]$. 
In particular, the task is to compute maximum connectivity among graphs of $k=O(w)$ vertices and $k-1$ edges. 
Now we use the rank-based approach and we compute $2^{O(w)}$ partitions $\mathcal P$ for each node $t$ and each subset $\mathcal S\subset \mathcal D(t)$. 
This completes the proof.
\end{proof}

\begin{lemma}\label{lem:dp-oct}
Given a tree decomposition $(T,\beta)$ of $G$ of weighted treewidth $w$,
we can solve \oct{} in $2^{O(w)} n$ time. 
\end{lemma}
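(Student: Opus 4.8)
The plan is to solve \oct{} by dynamic programming on the nice tree decomposition $(T,\beta)$, mirroring the structure of the \fvs{} algorithm in Lemma~\ref{lem:dp-fvs} but replacing the ``forest plus connectivity'' bookkeeping with a bipartiteness certificate. Recall that $G[\mathcal V_t\setminus \mathcal S']$ must be bipartite for $\mathcal S'$ to be an odd cycle transversal. A standard way to track bipartiteness in DP is to guess, for each vertex left in the graph, which of the two sides of a proper $2$-coloring it belongs to. Concretely, for each node $t$, each subset $\mathcal S\subseteq \mathcal D(t)$ (the deleted disks in the bag), and each $2$-coloring $\phi:\mathcal D(t)\setminus\mathcal S\to\{0,1\}$ that is proper on the clique structure of the bag, I would define $c[t,\mathcal S,\phi]$ to be the minimum size of a superset $\mathcal S'\supseteq\mathcal S$ with $\mathcal S'\subseteq\mathcal V_t$ such that $G[\mathcal V_t\setminus\mathcal S']$ admits a proper $2$-coloring agreeing with $\phi$ on the bag, and $\infty$ if none exists. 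As in Lemma~\ref{lem:dp-trihit}, a solution keeps at most two vertices from each clique $C_i$ of the clique partition, so the number of relevant $\mathcal S$ is $\prod_i O(|C_i|^2)=2^{O(w)}$; and each surviving clique contributes a proper $2$-coloring, which (since a clique is $2$-colorable only if at most two vertices remain, with those two on opposite sides) adds only a constant factor per clique, so the number of live $(\mathcal S,\phi)$ pairs per node is still $2^{O(w)}$.

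Next I would give the transition formulas for the four node types, exactly paralleling Lemma~\ref{lem:dp-trihit}. For a leaf node, $c[t,\emptyset,\emptyset]=0$. For an introduce node introducing clique $C$, I would restrict to colorings $\phi$ that are proper on $G[\mathcal D(t)\setminus\mathcal S]$ (otherwise set the value to $\infty$) and set $c[t,\mathcal S,\phi]=c[t',\mathcal S\setminus\mathcal D(C),\phi|_{\mathcal D(t')}]+|\mathcal S\cap\mathcal D(C)|$; the key point, as in the introduce case of Lemma~\ref{lem:dp-trihit}, is that the newly introduced disks of $C$ do not intersect any disk of $\mathcal V_t\setminus\mathcal D(t)$, so colorings and edges outside the bag are unaffected. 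For a forget node forgetting $C$, I would minimize over the possible fates of the forgotten disks, taking $c[t,\mathcal S,\phi]=\min_{\mathcal A\subseteq\mathcal D(C)}\min_{\psi}c[t',\mathcal S\cup\mathcal A,\phi\cup\psi]$, where $\mathcal A$ is the set of forgotten disks placed in the solution and $\psi$ ranges over proper colorings of the surviving forgotten disks consistent with $\phi$. For a join node with children $t_1,t_2$ sharing the same bag coloring $\phi$, I would set $c[t,\mathcal S,\phi]=c[t_1,\mathcal S,\phi]+c[t_2,\mathcal S,\phi]-|\mathcal S|$; correctness follows because any edge of $G[\mathcal V_t\setminus\mathcal S']$ has both endpoints inside $\mathcal V_{t_1}$ or both inside $\mathcal V_{t_2}$ (by the separator property of the bag), so the union of two proper colorings agreeing on the shared bag is again proper.

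The main obstacle I anticipate is that, unlike bipartiteness-by-coloring on an ordinary treewidth DP, here the bag is only small in \emph{weighted} width: a clique can contain up to $p$ vertices, yet any odd cycle transversal must delete all but at most two of them. I would exploit exactly this to keep the coloring table small, arguing that for each clique $C_i$ in the partition the number of (surviving set, $2$-coloring) combinations is $O(|C_i|^2)$ rather than $2^{|C_i|}$, so that multiplying over the partition gives $2^{O(w)}$ live states, matching the claimed bound. A second, more delicate point is correctness of the forget and join transitions in the presence of large cliques: I must verify that restricting attention to colorings that are proper \emph{on the bag} loses no global solution, which holds because the bag $\beta(t)$ separates $\mathcal V_t\setminus\beta(t)$ from the rest of $G$, so any globally proper coloring induces a bag coloring captured by some state. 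Finally, combining $2^{O(w)}$ states per node with $O(wn)$ nodes and $2^{O(w)}$ work per node yields the claimed $2^{O(w)}n$ running time; I note that the extra $\log w$ factor that appears in the overall \oct{} running times in Table~\ref{tab:results} comes not from this DP but from the separately-treated randomized candidate-set and contraction steps, so the statement of this lemma is genuinely $2^{O(w)}n$.
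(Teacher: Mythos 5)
Your proposal is correct and takes essentially the same approach as the paper: the paper's DP state is a function $f:\mathcal{D}(t)\to\{0,1,2\}$ (deleted / side one / side two of the bipartition), which is exactly your pair $(\mathcal S,\phi)$, and both the per-clique counting argument showing only $2^{O(w)}$ live states (all but two vertices of each clique must be deleted) and the join-node formula $c[t,f]=c[t_1,f]+c[t_2,f]-|f^{-1}(0)|$ with its disjointness justification coincide with yours. The paper only spells out the join case, so your explicit introduce and forget transitions are just a fuller write-up of the same argument.
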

\begin{proof}(Sketch.) 
For each node $t$ and each function $f:\mathcal{D}(t)\rightarrow \{0,1,2\}$, we define the subproblem of \oct{} as 
    \begin{align*}
        c[t,f]=
        & \min_{g} |g^{-1}(0)| \text{  where  } g:\mathcal{V}_t\rightarrow \{0,1,2\} \text{  s.t} \\
                & g(x)=f(x) \text{ if } x\in \mathcal{D}(t) \And \forall (x,y)\in E(G[\mathcal{V}_t]), \{g(x), g(y)\}\neq \{1,2\}
    \end{align*}
    Similar to Lemma~\ref{lem:dp-trihit}, we say that $g:\mathcal{V}_t\rightarrow \{0,1,2\}$ is a \emph{partial solution} of $c[t,f]$ if $g$ satisfies the conditions of the definition above. Also, we say $g$ is \emph{optimal} if the size of $|g^{-1}(0)|$ is minimum over all partial solutions of $g[t,f]$.
    Intuitively, as a solution $g$ of a subproblem $c[t,f]$, we remember not only the set of disks to be deleted as in Lemma~\ref{lem:dp-trihit} but also the bipartition of $\mathcal{V}_t\setminus f^{-1}(0)$. The number of subproblems  that have a solution is again $2^{O(w)}$ because for each clique $C\in \mathcal C(t)$, all but two vertices should be mapped to zero. 
    We focus on the join nodes because the formulas of other categories can be obtained in a straightforward way. Similar to \trihit{}, we claim that the following formula holds: 
    \begin{align*}
        c[t,f]=c[t_1,f]+c[t_2,f]-|f^{-1}(0)| 
    \end{align*}
    The disk $D_1$ in $\mathcal{V}_{t_1}\setminus \mathcal{V}_{t_2}$ and the disk $D_2$ in $\mathcal{V}_{t_2}\setminus \mathcal{V}_{t_1}$ do not intersect by definition. 
    Therefore, if $g$ is an optimal partial solution of $c[t,f]$, the domain restriction $g_1$ (and $g_2$) of $g$ with respect to  $\mathcal{V}_{t_1}$ (and $\mathcal{V}_{t_2}$) is a partial solution of $c[t_1,f]$ (and $c[t_2,f]$). Conversely, the domain union of two optimal partial solutions $g_1$ (and $g_2$) of $c[t_1,f]$ (and $c[t_2,f]$) makes a partial solution of $\mathcal{V}_t$ because $g_1(x)=g_2(x)=f(x)$ for all $x\in \mathcal D(v)$. This shows that the formula holds. 
    Then \oct{} can be solved in $2^{O(w)}n^{O(1)}$ time.
\end{proof}

\bibliography{papers}
 \bibliographystyle{splncs04}
\appendix


\end{document}